\newtheorem{theorem}{\textbf{Theorem}}
\newtheorem{proposition}{Proposition}
\newcommand{\bb}{\mathbf}
\newcommand{\mc}{\mathcal}
\newcommand{\nn}{\nonumber}
\newcommand{\te}{\text}
\newcommand{\bs}{\boldsymbol}
\newcommand{\Card}{\texttt{Card}}
\title{\textbf{Sparsity-Promoting Optimal Control of Cyber-Physical Systems over Shared Communication Networks}} 
\author{Nandini Negi and Aranya Chakrabortty\thanks{N. Negi and A. Chakrabortty are with the Department of Electrical and Computer Engineering at North Carolina State University.   Email: {\tt \{nnegi, achakra2\}@ncsu.edu}}}    
\date{\vspace{-5ex}}
\begin{document}
\maketitle
\begin{abstract}
\noindent Recent years have seen several new directions in the design of sparse control of cyber-physical systems (CPSs) driven by the objective of reducing communication cost. One common assumption made in these designs is that the communication happens over a dedicated network. For many practical applications, however,  communication must occur over shared networks, leading to two critical design challenges, namely - time-delays in the feedback and fair sharing of bandwidth among users. In this paper, we present a set of sparse $\mc{H}_2$ control designs under these two design constraints. An important aspect of our design is that the delay itself can be a function of sparsity, which leads to an interesting pattern of trade-offs in the $\mc{H}_2$ performance. We present three distinct algorithms. The first algorithm preconditions the assignable bandwidth to the network and produces an initial guess for a stabilizing controller. This is followed by our second algorithm, which sparsifies this controller while simultaneously adapting the feedback delay and optimizing the $\mc{H}_2$ performance using alternating directions method of multipliers (ADMM). The third algorithm extends this approach to a multiple user scenario where optimal number of communication links, whose total sum is fixed, is distributed fairly among users by minimizing the variance of their $\mc{H}_2$ performances. The problem is cast as a difference-of-convex (DC) program with mixed-integer linear program (MILP) constraints. We provide theorems to prove convergence of these algorithms, followed by validation through numerical simulations.
\end{abstract}
\section{Introduction}
Recent achievements in engineering, computer science and communication technology have led to the development of large-scale cyber-physical systems (CPSs) that comprise of multiple dynamical systems interacting over robust communication networks. In the past two decades, significant amount of research has been done on the control-theoretic aspects of CPS, for example, see \citet{antsaklis}, \citet{hespanha} and \citet{sinopoli}. Conventional CPS control designs usually rest on two idealistic assumptions, namely, that there is no limit on communication cost, and that the communication network is dedicated for control. In reality, however, network operators prefer sparse communication topologies for reducing the communication cost, and multiple users share the same communication network leading to feedback delay. Both factors give rise to important concerns on closed-loop stability and the fairness of bandwidth allocation among users. 

\par Motivated by this problem, in this paper, we propose a network control design  that lies at the intersection of three design challenges, namely - delays, sparsity of communication network, and optimal $\mc{H}_2$ performance for a LTI system. Results addressing pairs of these three are known in the literature. For example, at the junction of sparsity and optimal control, sparse linear quadratic regulators (LQR) and $\mc{H}_2$ controllers have been proposed by \citet{mihailo} using alternating directions method of multipliers (ADMM), by \citet{wytock} using proximal Newton method, and by \citet{motee} using rank-constrained convex optimization, among others. The approach has been extended to sparse output feedback in \citet{sparse2} using a variant of Kalman filtering.

\par A vast literature exists on the stability and control of LTI systems with time-delays, starting from the seminal works of \citet{kolmanovskii}, \citet{stability}, \citet{krstic}, and others. These papers use various analytical tools such as Riccati matrix equations, Lyapunov-Krasovskii functionals, and small-gain theorem to derive sufficient conditions for stability of time-delayed systems. Optimal control of such systems has been studied in \citet{kushner}, \citet{ross}, \citet{esfahani}, and more recently in \citet{spectral} and \citet{hale} using spectral discretization of delays.

\par However, when all the three design challenges are combined together, the resulting problem becomes far more difficult to solve due to its non-convex nature as well as due to the complicated interplay between the design factors, constrained by nonlinear matrix inequalities (NMIs) arising from delay stability requirements. The problem becomes more complicated when the delay itself is a function of sparsity as eliminating communication links can free up bandwidth and reduce the per-link delay. More importantly, the interpretation of these challenges when multiple users share the same communication network is also an open question as in this case the network operator must divide the bandwidth in a fair way so that each user has a guaranteed relative $\mc{H}_2$ performance while maintaining closed-loop stability. Classical results on bandwidth allocation such as by \citet{net1} only provide protocols defined for static utility functions. Our problem, in contrast, requires bandwidth sharing for dynamic and sparse $\mc{H}_2$ control.
\par \noindent Summarizing the above points, the main contributions of this paper are as follows:
\begin{itemize}
    \item[-] We present two algorithms by which one can design sparse optimal controllers for LTI systems in the presence of feedback delays. Starting from an infinite bandwidth assumption, Algorithm 1 designs a practical finite value of the network bandwidth in Section \ref{sec:preconditioning} by co-designing the delays and the corresponding stabilizing controller. We derive a set of convex relaxations of NMI based sufficient conditions on delay stability to perform this co-design. Thereafter, we present Algorithm 2 in Section \ref{sec:mainalgo}, which uses ADMM to sparsify this controller while minimizing the $\mc{H}_2$ norm.
    \item[-] We next extend the sparse controller of Section \ref{sec:mainalgo} to the case when multiple sets of such controllers need to be designed for $N$ decoupled systems belonging to $N$ different users, sharing the same communication network, while ensuring that each user gets a \textit{fair} share of communication links. The problem is formulated in terms of minimizing the \textit{variance} of the $\mc{H}_2$ performance of each individual user. Algorithm 3 is presented in Section 6 to solve this as a difference-of-convex program with MILP constraints.
    \item[-] We validate our results using numerical simulations in Sections \ref{sec:simulations} and \ref{sec:simulations2}. The former illustrates the inter-dependencies between delay, sparsity and optimal control. The latter shows the efficiency of our bandwidth sharing algorithm in reaching its global minimum.
\end{itemize}
\par\noindent Some preliminary results on this topic have been report\-ed in our recent conference paper \citet{nandini}. The results in this paper, however, are significantly extensive in comparison. For example, \citet{nandini} only minimized a lower bound for the $\mc{H}_2$ norm of a time-delayed system using convex relaxations of bilinear matrix inequalities. In contrast, in this paper, we take a completely different approach by minimizing the exact $\mc{H}_2$ norm of the delayed system  using spectral discretization. The preconditioning algorithm (Algorithm 1) in Section \ref{sec:preconditioning} and the bandwidth sharing algorithm (Algorithm 3) in Section \ref{sec:coop} are also added as new contributions. 
\par\textbf{Notations:} $\mathbb{R}$, $\mathbb{R}_{+}$ and $\mathbb{R}_{++}$ represent the set of real, non-negative real and positive real numbers respectively. $\mathbb{Z}_{+}$ represents the set of non-negative integers. The set $\mathbb{N}_n$ represents the set of natural numbers till $n$. The set $B \backslash A$ denotes the set of elements in $B$ but not in $A$. The Euclidean norm and Frobenius norm are represented by $\|\cdot\|$ and $\|\cdot\|_{\te{F}}$, respectively. Unless otherwise stated, $I_n$ denotes an identity matrix of order $n$. The symbol $\star$ in the lower triangular part of a matrix represents a symmetric block.
\section{Problem Formulation}
 \label{ProblemFormulation}
 \subsection{CPS Modeling}
 \label{sec:CPSModeling}
We consider an LTI system represented by the following state-space equation:
\begin{equation}
\dot{x}(t)=Ax(t)+ B u(t) + B_w w(t) , \label{delayfreestatespace}
\end{equation} 
where $x \in \mathbb{R}^n$, $u \in \mathbb{R}^m$ and $w\in \mathbb{R}^{p}$ are the state, control input and exogenous input, respectively. $A\in\mathbb{R}^{n\times n}$, $B\in \mathbb{R}^{n\times m}$ and $B_{w} \in \mathbb{R}^{n\times p}$ are the corresponding matrices. Assuming $(A,B_w)$ to be stabilizable and $(A,B)$ to be controllable, our objective is to design a static state feedback controller $u(t)=-Kx(t)$, with  $K\in\mathbb{R}^{m\times n}$, that minimizes the $\mc{H}_2$ norm of \eqref{delayfreestatespace}. This controller will be implemented in a distributed way using a communication network as in Fig. \ref{fig:CPS}, which shows the CPS representation of the closed-loop system.
\begin{figure}[hbtp]
\centering
\includegraphics[scale=0.38]{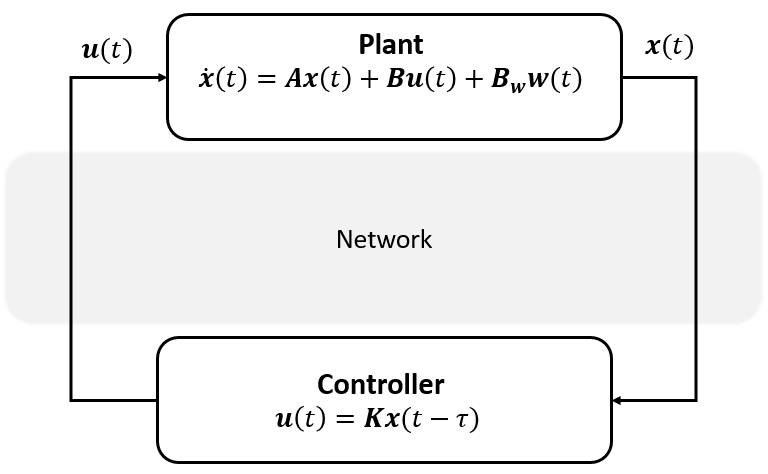}
\caption{Closed-loop CPS representation. Cyber layer receives $x(t)$, computes $u(t)$ after a delay $\tau$ and transmits it to the actuators.}
\label{fig:CPS}
\end{figure}
 The communication of state $x_j$ to compute the input $u_i$ will suffer a delay $\tau_{ij}$ (in seconds), $i\in\mathbb{N}_m$, $j\in\mathbb{N}_n$. This delay consists of two parts, namely, $\tau_{ij}=\tau_{p_{ij}}+\tau_{t_{ij}}$, where $\tau_{p_{ij}}$ is the \textit{propagation delay} and $\tau_{t_{ij}}$ is the \textit{transmission delay}. $\tau_{p_{ij}}$ is defined as the ratio of link length to the speed of light, which is assumed to be a constant denoted as $\tau_p$ for every pair $i,j$. We assume that the communication link connecting any $j$-th sensor to any $i$-th actuator is allotted equal bandwidth, which implies that $\tau_{t_{ij}}$ is also constant for every $i,j$ \citep{ugtext}. We, therefore, denote $\tau_{ij}$ simply as a constant $\tau$. In reality, even if $\tau_{ij}$ deviates from $\tau$ due to traffic fluctuations and network uncertainties, our design will still hold as long as this deviation lies within the stability radius of the plant (See Proposition 1.14 and Theorem 1.16 of \citet{stability2}). The $i$-th control input is written as $u_i (t) =- \sum_{j=1}^{n} K_{ij} x_j(t-\tau)$, and accordingly the closed-loop system is written as:
 \begin{subequations}
\label{delayedstatespaceMain}
\begin{align}
&\dot{x}(t)=Ax(t) - BK x(t-\tau) + B_w w(t), \label{delayedstatespace} \\
&z(t) = C_1 x(t) + D_2u(t), \
C_1=\begin{bmatrix}
Q^{\frac{1}{2}} \\ 0
\end{bmatrix}, \  D_2=\begin{bmatrix}
0 \\ R^{\frac{1}{2}}
\end{bmatrix},\label{delayedstatespace1}
\end{align}
 \end{subequations}
where $Q\succeq 0$, $R\succ 0$. $K$ follows from the minimization of the  $\mc{H}_2$ norm of the transfer function of \eqref{delayedstatespaceMain} from input $w(t)$ to output $z(t)$. In general, $K$ is a dense matrix, which incurs high communication and computation costs. To reduce these costs, we impose a sparsity constraint on $K$. As $K$ is sparsified, the bandwidth $c$ is equally redistributed among the remaining links. From \citet[Ch. 3]{ugtext}, $\tau$ can be written as
\begin{equation}
     \tau=\tau_t+\tau_p =  \underbrace{\kappa \left(\frac{\Card(K)}{c}\right) + \tau_p}_{\mc{Z}\big(\Card(K),c,\tau_p\big)}, \label{Z}
\end{equation}
where $\Card(K)$ denotes the cardinality or the number of non-zero elements in $K$ and $\kappa>0$ is a proportionality constant.The sparsity level $s$ of $K\in\mathbb{R}^{m\times n}$ is defined as $s=m n - \Card (K)$. Equation \eqref{Z} implies that $\tau$ will change as the sparsity level of $K$ changes. This change is captured by the function $\mc{Z}(\cdot)$. 
\subsection{Problem Statement}
Let $\mc{V}_\tau$ be the set of $K$ that stabilize \eqref{delayedstatespaceMain} for a delay $\tau$. Our objective is to design a sparse controller $K$ for the closed-loop stability of \eqref{delayedstatespaceMain} while minimizing its $\mc{H}_2$ norm $J_\tau(K)$. The controller design problem is stated as:
\begin{subequations}
\label{bigh2}
\begin{align}
\boldsymbol{\mc{O}}_1 : \ &\underset{K}{\text{minimize}} \ \ \ \ \ J_\tau(K) +\lambda \Card(K) \label{bigh2:eq1}\\ 
&\text{subject to}   \ \ \ \   K\in \mc{V}_\tau, \label{bigh2:eq2} \\
&\hspace{1.8cm} \ \tau=\mc{Z}\big(\Card (K),c,\tau_p\big), \label{bigh2:eq3}
\end{align}
\end{subequations}
where $\lambda > 0$ is a regularization parameter to emphasize the sparsity of $K$. We assume that the set  $\mc{V}_\tau$ is non-empty. The closed form expression of $J_\tau(K)$ will be derived shortly in Section \ref{DelayedH2norm}, where we present our main algorithm for solving $\bs{\mc{O}}_1$. Before stating that, we first develop a pre-conditioning algorithm by which we can construct an appropriate initial condition for it. 

\section{Preconditioning Algorithm : Co-design of \texorpdfstring{$\tau$}{tau} and \texorpdfstring{$K$}{K} }
\label{sec:preconditioning}
Given $A$, $B$, $B_w$ in \eqref{delayedstatespaceMain}, our sparsity-promot\-ing algorithm for solving $\bs{\mc{O}}_1$ in Section \ref{sec:mainalgo} (namely, Algorithm 2) will require a pair 
\begin{equation}
    (K',\tau') :  K'\in\mc{V}_{\tau'},  \ \tau' =\mc{Z}\big(\Card(K'),c,\tau_p\big), \label{pairrequired}
\end{equation}
as an initial condition, i.e., $(K',\tau')$ must stabilize \eqref{delayedstatespaceMain}, where $\tau'=\mc{Z}\big(\Card(K),c,\tau_p\big)$; $c$ is a {nominal} allotted bandwidth and $\tau_p$ is a known constant. The choice of $c$ plays an important role in finding this pair. For example, let $K_o$ be any matrix such that $(A-BK_o)$ is Hurwitz. Then from \citet[Chapter 2]{stability} it follows that there exists a $\tau^m_o\geq 0$ such that $x=0$ is an asymptotically stable equilibrium of \eqref{delayedstatespace} with $K=K_o$, $w=0$ and $\tau \in [0,\tau^m_o]$; $\tau^m_o$ is referred to as the \textit{delay margin} of $K_o$. Thus, if $c$ is such that $\tau^*_o:=\mc{Z}(\Card(K_o),c,\tau_p)$ is less than or equal to $\tau^m_o$, then one can simply use $(K',\tau')\equiv(K_o, \tau^*_o)$ as an initial condition for Algorithm \ref{algo2}. However, if $c$ is too small so that $\tau^m_o<\tau^*_o$, then a different initial guess needs to be found. In this section we develop an algorithm for finding $(K', \tau')$ for such conservative values of $c$. In case the pair $(K',\tau')$ does not exist for the given $c$, the steps of this algorithm will adapt $c$ to $c'$ so that such a pair can be found. We refer to the interval $[0, \tau^m_o]$ as a \textit{stable delay interval} of $(A, B, K_o)$. Depending on these three matrices, a system may have multiple non-overlapping stable delay intervals. This fact will also be accounted for in the algorithm.
\subsection{Delay Stability Conditions}
As $(K',\tau')$ in \eqref{pairrequired} must stabilize \eqref{delayedstatespaceMain}, we first recall an LMI based sufficient condition that guarantees the stability of \eqref{delayedstatespaceMain} for a given $K$ and $\tau$.
\begin{theorem}
\label{theorem1}
 \citep*{stability} The system in \eqref{delayedstatespace} is asymptotically stable for a given $K$, delay ${\tau}> 0$ and $w=0$ if there exist $n\times n$ matrices $P=P^{T},\ Q_{0}, \ Q_{1}, \  S_{0}=S^{T}_{0}, \ S_{1}=S^{T}_{1}, \ R_{00}=R^{T}_{00}, \ R_{01}, R_{11}=R_{11}^{T}$ such that the following LMIs are feasible:
\begin{subequations}
\label{lmi}
\begin{align}
&\Phi_{\te{L}}=\begin{pmatrix}
\Delta_{0} & \Delta_1 & -D_{1}^a & -D_{1}^{b} \\
\star & S_{1} & -D_{2}^{a} & -D_{2}^{b} \\
\star& \star& \Delta_2 &  0\\
\star & \star & \star  & \Delta_3
\end{pmatrix} \succ 0 , \label{lmi1}\\
&\Psi_\te{L}=\begin{pmatrix}
P & Q_0 & Q_1 \\
\star & R_{00}+\frac{1}{{\tau}} S_{0} & R_{01} \\
\star & \star & R_{11}+\frac{1}{{\tau}} S_{1} 
\end{pmatrix} \succ 0,\label{lmi2}\\
&\Delta_{0} = -PA-A^{T}P	- Q_{0} -Q_{0}^{T} -S_{0}, \\
&\Delta_1 = Q_{1}-PBK, \ \Delta_3 = 3 (S_{0}-S_{1}), \\
& \Delta_2={{\tau}} (R_{00}-R_{11})+(S_{0}-S_{1}),\\
 &  D_1^a = \frac{{\tau}}{2} A^{T}(Q_{0}+Q_{1}) + \frac{{\tau}}{2} (R_{00} +R_{01})-(Q_0-Q_1),\\
& D_2^a = \frac{{\tau}}{2} K^{T} B^T(Q_{0}+Q_{1}) - \frac{{\tau}}{2} (R_{01}^{T} +R_{11}), \\
& D_1^b= -\frac{{\tau}}{2} A^{T}(Q_{0}-Q_{1}) - \frac{{\tau}}{2} (R_{00} -R_{01}),\\
 & D_2^b = -\frac{{\tau}}{2}  K^{T}B^T(Q_{0}-Q_{1}) + \frac{{\tau}}{2} (R_{01}^{T} -R_{11}).
\end{align}
\end{subequations}
\end{theorem}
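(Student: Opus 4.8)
The approach is the Lyapunov--Krasovskii functional (LKF) method; given the notation $P,Q_{0},Q_{1},S_{0},S_{1},R_{00},R_{01},R_{11}$ and the $\tfrac1\tau S$ / $\tau R$ / factor-$3$ structure of the LMIs, the relevant functional is the \emph{discretized complete-type} LKF of Gu--Kharitonov--Chen with a single discretization segment. First I would start from the complete-type LKF
\[
V(x_{t})=x(t)^{T}Px(t)+2x(t)^{T}\!\!\int_{-\tau}^{0}\!\!Q(\xi)x(t{+}\xi)\,d\xi+\!\int_{-\tau}^{0}\!\!\int_{-\tau}^{0}\!\!x(t{+}\xi)^{T}R(\xi,\eta)x(t{+}\eta)\,d\xi\,d\eta+\!\int_{-\tau}^{0}\!\!x(t{+}\xi)^{T}S(\xi)x(t{+}\xi)\,d\xi,
\]
and replace $Q(\cdot),S(\cdot)$ by continuous piecewise-linear interpolants and $R(\cdot,\cdot)$ by a piecewise-bilinear interpolant on $[-\tau,0]$ with node values $Q(-\tau){=}Q_{0}$, $Q(0){=}Q_{1}$, $S(-\tau){=}S_{0}$, $S(0){=}S_{1}$, $R(-\tau,-\tau){=}R_{00}$, $R(-\tau,0){=}R_{01}$, $R(0,0){=}R_{11}$; the resulting discretized $V$ is a quadratic form in the lifted vector built from $x(t)$, $x(t{-}\tau)$ and the segment integral $\int_{t-\tau}^{t}x(s)\,ds$. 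The first step is then positivity: writing the integrals of the piecewise-linear interpolants through their Gram ("mass") matrices reduces $V\ge\varepsilon\|x(t)\|^{2}$ to exactly the LMI $\Psi_{\te{L}}\succ0$, with the $\tfrac1\tau S_{0},\tfrac1\tau S_{1}$ entries being the (one-over-segment-length)-scaled contributions of the $S(\cdot)$-integral.

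The core step is to differentiate $V$ along the closed-loop solutions of $\dot x(t)=Ax(t)-BKx(t{-}\tau)$ with $w=0$. The derivative of the $P$-term yields $-(PA+A^{T}P)$ and $-PBK$, i.e.\ the leading parts of $\Delta_{0}$ and $\Delta_{1}$; the derivatives of the $Q$- and $R$-integrals contribute boundary terms evaluated at $\xi=0$ and $\xi=-\tau$ together with integrals against $\partial_{\xi}Q\equiv(Q_{1}{-}Q_{0})/\tau$ and $\partial_{\xi}R$, which on a single segment are constant --- this produces the constant offsets $-(Q_{0}{-}Q_{1})$ and the $\tfrac\tau2 A^{T}(Q_{0}{\pm}Q_{1})$, $\tfrac\tau2(R_{00}{\pm}R_{01})$ cross-terms appearing in $D_{1}^{a},D_{1}^{b},D_{2}^{a},D_{2}^{b}$; and the derivative of the $S$-integral gives $-S_{0}$ in $\Delta_{0}$ plus a residual $\int_{t-\tau}^{t}x$, which, bounded via the quadratic-form inequality for the interpolant, yields $\Delta_{3}=3(S_{0}{-}S_{1})$ and the $(S_{0}{-}S_{1})$, $\tau(R_{00}{-}R_{11})$ entries of $\Delta_{2}$. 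Collecting everything gives $\dot V\le-\zeta(t)^{T}\Phi_{\te{L}}\,\zeta(t)$ for a lifted vector $\zeta(t)\in\mathbb{R}^{4n}$ whose four blocks match the $4{\times}4$ block partition of $\Phi_{\te{L}}$, so $\Phi_{\te{L}}\succ0$ forces $\dot V<0$ whenever $\zeta(t)\neq0$. Asymptotic stability of $x=0$ then follows from the Lyapunov--Krasovskii theorem (a positive, decrescent functional with negative-definite derivative).

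The main obstacle is not conceptual but the two technical estimates from the discretization machinery: (i) the lower bound for the quadratic form of the piecewise-bilinear $R$-interpolant that converts the continuum double integral into the finite block of $\Phi_{\te{L}}$ --- this is where the "$3$" (and the hidden $\tfrac16,\tfrac13$ mass-matrix weights) enters --- and (ii) the careful bookkeeping of the segment-length scalings so that one-over-length factors land on $S$ in $\Psi_{\te{L}}$ while length factors land on $R$ in $\Phi_{\te{L}}$; the bilinear term $PBK$ needs no special treatment because $K$ is held fixed here. Since the statement is quoted verbatim from \citet{stability}, I would ultimately appeal to that reference for the block-by-block verification rather than recompute each entry.
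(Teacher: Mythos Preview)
Your proposal is essentially correct: the paper itself does not prove Theorem~\ref{theorem1} but simply cites it from \citet{stability}, and the method you outline---the discretized complete-type Lyapunov--Krasovskii functional of Gu--Kharitonov--Chen with a single mesh segment---is exactly the construction used in that reference to obtain these LMIs. Your identification of where each block of $\Phi_{\te{L}}$ and $\Psi_{\te{L}}$ originates (boundary terms, interpolant derivatives, mass-matrix weights producing the factor $3$ and the $\tfrac{1}{\tau}$ scalings) is accurate, and your closing remark that one should ultimately defer to \citet{stability} for the entry-by-entry verification matches precisely what the paper does.
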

For our problem, however, the matrices $P$, $Q_0$, $Q_1$, $R_{00}$, $R_{01}$, $R_{11}$, $S_0$, $S_1$ are unknown in addition to the design variables $K$ and $\tau$. As a result, the affine conditions in \eqref{lmi} change to nonlinear matrix inequalities (NMIs). In the following subsections, we will replace the LMI matrices $\Phi_{\te{L}} $ and $\Psi_\te{L} $ in \eqref{lmi} by $\Phi_\te{N} $ and $\Psi_\te{N} $ to signify their conversion to NMIs.
\subsection{Relaxing the NMIs : Design of \texorpdfstring{$K'$}{K'} and \texorpdfstring{$\tau'$}{tau'}}
If the NMIs in \eqref{lmi} could be solved then one would obtain a $(K,\tau)$ with $K\in\mc{V}_{\tau}$, thereby at least partly satisfying \eqref{pairrequired}\footnote{That means that the delay-stability condition is satisfied, but $\tau=\mc{Z}(\Card(K),c,\tau_p)$ is not ensured from the solution $(K,\tau)$ of the NMIs \eqref{lmi1}-\eqref{lmi2}.}. However, solving these NMIs is not tractable. Therefore, we first relax these NMIs to a set of convex constraints. We carry out these relaxations in the form of an iterative algorithm with initial condition $(K_o,\tau^m_o)$ where $K_o$ stabilizes the delay-free system (i.e., $A-BK_o$ is Hurwitz) and $\tau^m_o$ is the delay margin of $(A,B,K_o)$. $\tau^m_o$ can be found from either a frequency-sweeping test as shown in \citet[Theorem 2.6; Example 2.9]{stability}, or by checking the stability of a higher-order Pade approximation of the delayed model. The following theorem relaxes the NMIs \eqref{lmi1}-\eqref{lmi2} into a set of convex constraints to obtain a pair $(K,\tau)$ with $K\in\mc{V}_\tau$ from $(K_o,\tau_o^m)$ such that $\tau > \tau_o^m$.
\begin{theorem}
\label{theorem2}
Let the pair $(K_o,\tau^m_o)$ satisfy the LMIs of Theorem \ref{theorem1} with the auxiliary variables $P_o$, $Q_{0o}$, $Q_{1o}$, $R_{00o}$, $R_{01o}$, $R_{11o}$, $S_{0o}$ and $S_{1o}$. Let us consider perturbations $(\Delta K,\Delta \tau)$ on $(K_o,\tau^m_o)$ and $\Delta P$, $ \Delta Q_{0}$, $ \Delta Q_{1}$, $\Delta R_{00}$, $ \Delta R_{{01}}$, $ \Delta R_{{11}}$, $\Delta S_{0}$ and $\Delta S_{1}$ on the auxiliary variables as the solution of the following constraints:
\begin{subequations}
\label{constraint}
\begin{align}
\Phi_0 \succeq & \ \alpha I, \label{constraint1} \\
\alpha \geq & \ \eta \|\mc{B}_1\| + \gamma (\|\mc{B}_2\| + \|\mc{B}_3\|) + \eta \gamma \|\mc{B}_4\| + \epsilon, \label{constraint2} \\
 \gamma \geq & \  \Delta \tau, \label{constraint3}\\
   \eta \geq & \ \|\Delta K \|, \label{constraint4}\\
\Psi_0 \succeq & \ \beta I, \label{constraint5}  \\
\beta \succeq & \gamma \|\mc{C}_1\| + \epsilon, \label{constraint6}
\end{align}
\end{subequations}
where $\Delta \tau \in\mathbb{R}_+$, $\mc{B}_{i}(K_o,{\tau}^m_o),$ $i\in\mathbb{N}_4$, and $\mc{C}_1(K_o,{\tau}^m_o)$ are given in Appendix (Section \ref{subsec:proofoftheorem1}), $\alpha, \beta\in\mathbb{R}_{++}$ are optimization variables, and $\epsilon, \gamma, \eta\in\mathbb{R}_{++}$ are scalar design variables. Then, the pair $(K=K_o+\Delta K ,\tau= \tau^m_o+\Delta \tau)$ satisfies Theorem \ref{theorem1} with the auxiliary variables $P = P_o+\Delta P$, $Q_0 = Q_{0o} + \Delta Q_0$, $ Q_1 =Q_{1o}+\Delta Q_{1}$, $R_{00} = R_{00o}+\Delta R_{00}$, $R_{01}=R_{01o}+ \Delta R_{01}$, $R_{11}=R_{11o}+ \Delta R_{{11}}$, $S_0 = S_{0o}+ \Delta S_{0}$ and $S_{1} = S_{1o} + \Delta S_{1}$. We use the shorthand notation of $(K,\tau)\in\mc{X}_{(K_o,\tau^m_o)}$ to denote that $(K,\tau)$ is obtained from a given pair $(K_o,\tau^m_o)$ using \eqref{constraint}.
\end{theorem}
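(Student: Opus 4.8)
The statement is essentially a robustness result around the LMI certificate of Theorem~\ref{theorem1}: substituting $K=K_o+\Delta K$, $\tau=\tau^m_o+\Delta\tau$ and the shifted auxiliary variables $P_o+\Delta P,\dots,S_{1o}+\Delta S_1$ into $\Phi_\te{L}$, $\Psi_\te{L}$ of \eqref{lmi1}--\eqref{lmi2} turns them into the NMI matrices $\Phi_\te{N}$, $\Psi_\te{N}$, so it suffices to prove $\Phi_\te{N}\succ 0$ and $\Psi_\te{N}\succ 0$; then $(K,\tau)$ together with those auxiliary variables is a feasible point of Theorem~\ref{theorem1}, i.e.\ $K\in\mc{V}_\tau$. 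For each of the two matrices I would write it as a ``base'' matrix, kept positive definite by a semidefinite constraint, plus a ``perturbation'' matrix whose norm is dominated by the scalar constraints, and then close the argument with Weyl's inequality $\lambda_{\min}(M+E)\ge\lambda_{\min}(M)-\|E\|$.

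\textbf{The $\Phi$ block.} Collect into $\Phi_0$ precisely those terms of $\Phi_\te{N}$ that contain neither $\Delta K$ nor $\Delta\tau$. Since every block of $\Phi_\te{L}$ is at most bilinear and no block is a product of two auxiliary variables, $\Phi_0$ is affine in $\Delta P,\dots,\Delta S_1$, so $\Phi_0\succeq\alpha I$ in \eqref{constraint1} is a legitimate LMI. The residual $\Delta\Phi=\Phi_\te{N}-\Phi_0$ consists of the terms linear in $\Delta\tau$, the terms linear in $\Delta K$, and the bilinear $\Delta K\,\Delta\tau$ terms (these last coming from the $\tfrac{\tau}{2}K^{T}B^{T}(\cdot)$ and $\tau(\cdot)$ blocks). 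Applying the triangle inequality and submultiplicativity of the norm, together with $\|\Delta K\|\le\eta$ and $\Delta\tau\le\gamma$ from \eqref{constraint3}--\eqref{constraint4}, gives $\|\Delta\Phi\|\le\eta\|\mc{B}_1\|+\gamma(\|\mc{B}_2\|+\|\mc{B}_3\|)+\eta\gamma\|\mc{B}_4\|$ for the fixed matrices $\mc{B}_1,\dots,\mc{B}_4$ (made explicit in the Appendix). Then \eqref{constraint1}--\eqref{constraint2} and Weyl's inequality yield $\lambda_{\min}(\Phi_\te{N})\ge\alpha-\|\Delta\Phi\|\ge\epsilon>0$.

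\textbf{The $\Psi$ block.} The same scheme applies, the only new difficulty being the $1/\tau$ nonlinearity. Writing $\tfrac{1}{\tau^m_o+\Delta\tau}=\tfrac{1}{\tau^m_o}-\tfrac{\Delta\tau}{\tau^m_o(\tau^m_o+\Delta\tau)}$ and using $\Delta\tau\in\mathbb{R}_+$ so that $0\le\tfrac{\Delta\tau}{\tau^m_o(\tau^m_o+\Delta\tau)}\le\tfrac{\Delta\tau}{(\tau^m_o)^2}$, the blocks $R_{ii}+\tfrac1\tau S_i$ decompose into a part with the nominal coefficient $\tfrac{1}{\tau^m_o}$ — absorbed, together with $P,Q_0,Q_1,R_{01}$, into a matrix $\Psi_0$ that is affine in the perturbations, so that $\Psi_0\succeq\beta I$ in \eqref{constraint5} is again an LMI — plus a correction of norm at most $\tfrac{\gamma}{(\tau^m_o)^2}(\|S_0\|+\|S_1\|)$, which, with the remaining off-diagonal perturbations, I majorize by $\gamma\|\mc{C}_1\|$. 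Then \eqref{constraint5}--\eqref{constraint6} and Weyl's inequality give $\lambda_{\min}(\Psi_\te{N})\ge\beta-\gamma\|\mc{C}_1\|\ge\epsilon>0$. Hence both \eqref{lmi1} and \eqref{lmi2} hold at $(K,\tau)=(K_o+\Delta K,\tau^m_o+\Delta\tau)$ with the shifted auxiliary variables, which proves the claim.

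\textbf{Where the difficulty lies.} The analytic ingredients (triangle inequality, submultiplicativity, Weyl's inequality) are routine; the work is in the algebraic bookkeeping — expanding $\Phi_\te{N}$ and $\Psi_\te{N}$ entrywise, isolating the coefficient matrices $\mc{B}_1,\dots,\mc{B}_4,\mc{C}_1$ that depend only on $A,B,K_o,\tau^m_o$, and verifying that every cross-term between $(\Delta K,\Delta\tau)$ and the auxiliary-variable perturbations, as well as the $1/\tau$ remainder, is genuinely dominated by the right-hand sides of \eqref{constraint2} and \eqref{constraint6}, while at the same time keeping the split so that $\Phi_0$ and $\Psi_0$ remain affine in the perturbations. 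The nonnegativity $\Delta\tau\in\mathbb{R}_+$ is used essentially here (for the sign of the $1/\tau$ remainder and to keep $\tau^m_o+\Delta\tau$ inside a stable delay interval), and the explicit forms of $\mc{B}_i$ and $\mc{C}_1$ are best deferred to the Appendix, as the statement already indicates.
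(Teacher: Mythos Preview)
Your high-level skeleton---split $\Phi_\te{N}$ and $\Psi_\te{N}$ into a ``base'' controlled by an LMI plus a ``perturbation'' whose operator norm is dominated by scalar bounds, then invoke Weyl---is exactly the paper's route. But your description of the split is off in a way that would make the inequality in \eqref{constraint2} false.

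You take $\Phi_0$ to be the part ``containing neither $\Delta K$ nor $\Delta\tau$'' and assert that $\mc{B}_1,\dots,\mc{B}_4,\mc{C}_1$ ``depend only on $A,B,K_o,\tau^m_o$''. Neither matches the paper, and neither \emph{can} hold. The block $\Delta_1=Q_1-PBK$ of $\Phi_\te{L}$ yields, after substitution, a genuinely bilinear term $-\Delta P\,B\,\Delta K$; the blocks $D^a_2,D^b_2$ yield trilinear terms $\Delta\tau\cdot\Delta K^{T}B^{T}\Delta Q_i$. Such terms cannot be captured by $\eta\|\mc{B}_1\|$ or $\eta\gamma\|\mc{B}_4\|$ with \emph{fixed} $\mc{B}_i$. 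In the paper's decomposition $\Phi_0$ is the part that is affine in \emph{all} decision variables---so it keeps, for instance, $-P_oB\Delta K$ and $-\tfrac{\Delta\tau}{2}A^{T}(Q_{0o}+Q_{1o})$---while the $\mc{B}_i$ themselves carry the auxiliary perturbations: $\mc{B}_1$ contains $\Delta P,\Delta Q_0,\Delta Q_1$ (and even $\Delta\tau$), $\mc{B}_2,\mc{B}_3$ contain the $\Delta Q_i,\Delta R_{ij}$, $\mc{B}_4$ contains $\Delta Q_0,\Delta Q_1$, and $\mc{C}_1$ contains $\Delta S_0,\Delta S_1$. Convexity of \eqref{constraint2} survives only because $\eta,\gamma$ are \emph{fixed} design scalars, so that each summand is a positive constant times the norm of an affine function of the decision variables.

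On the $\Psi$ side your exact identity for $1/\tau$ and the sign argument from $\Delta\tau\ge 0$ are actually tighter than what the paper does (it uses the first-order approximation $1/\tau\approx 1/\tau^m_o-\Delta\tau/(\tau^m_o)^2$ under the standing assumption $\gamma\ll 1$), but the same dependence issue applies: $\mc{C}_1$ is built from $\Delta S_0,\Delta S_1$, not from nominal data alone. Once you let the $\mc{B}_i,\mc{C}_1$ depend on the auxiliary perturbations and adjust your definition of $\Phi_0,\Psi_0$ accordingly, the rest of your argument goes through unchanged.
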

\begin{proof}
 The proof is given in Appendix (Section \ref{subsec:proofoftheorem1}).
\end{proof}
\begin{algorithm}[hbtp]
 \caption{Preconditioning Algorithm}
  \label{algo1}
 \begin{algorithmic}[1]
\State \textbf{Input:} Initial feasible point $K_{o}$. Initial Delay $\tau^m_{o}$.
\Statex \hspace{1.1cm} Set $(K_0=K_o,\tau_0=\tau^m_o)$. Obtain Auxiliary
\Statex \hspace{1.1cm} variables $P_0$, $\ldots$, $S_{10}$ from Theorem \ref{theorem1}.
\Statex \hspace{1.1cm} Set $\Delta \tau_0 > \epsilon$ (arbitrary value) and $k=0$.
\While {$\Delta \tau_k>\epsilon$}
\If{$K_k\in \mc{V}_{\tau^*_k}$, where $\tau^*_k=\mc{Z}(\Card(K_k),c,\tau_p)$}
\State End with $(K'=K_k,\tau'=\tau^*_k)$.
\Else
\State \textbf{Input:} $z_0=\tau_k$, $g_0=2(T-\tau_k)$, $D_0=K_k$.
\For{$j=0,1,2,\ldots$}
\State  Use \textbf{$\mu$ finding program} to solve:
\State  $(D_j,\mu_j) = \underset{(D,\mu>0)}{\te{argmin}} \ h_k(z_j + \mu g_j)$  
\State  \hspace{1.5cm} s.t. $ \ \ (z_j+ \mu g_j) \in \mc{X}_{(D_j,z_j)}$.
\State Obtain $z_{j+1} = z_j + \mu_j g_j$
\If{$\|z_{j+1} - \tau_k\| \geq \Delta$}
\State Obtain $z'_{j+1} ={(\Delta - z_j + \tau_k)}/g_j$.
\If{$(D_j,z'_{j+1})$ is stable}
\State \textbf{Stop with }$\tau_{k+1} = z'_{j+1}$.
\EndIf
\EndIf
\If{$\|\nabla h_k(z_{j+1})\| \leq \epsilon$}
\State \textbf{Stop with }$\tau_{k+1} = z_{j+1}$.
\EndIf
\State Obtain $g_{j+1} = -\nabla h_k(z_{j+1}) + \frac{(\|\nabla h_k(z_{j+1})\|^2 g_j)}{\|\nabla h_k(z_{j})\|^2}$.
\EndFor
\State \textbf{Result: $K_k=D_j$, $\tau_k=z_j$}
\EndIf
\State Set $\Delta \tau_{k+1} = \tau_{k+1}-\tau_{k}$. Set $k=k+1$. 
\EndWhile
\State \textbf{Result: $K'=K_k$, $\tau'=\tau_k$}
\end{algorithmic}
 \end{algorithm}
 \begin{algorithm}
 \begin{algorithmic}[1]
  \Statex \textbf{\large $\mu$ finding program : $j$-th iteration}
 \State \textbf{Input:} $z_j$, $g_j$ and $D_j$.
\State \textbf{Input:} Auxiliary variables $P_j$, $\ldots$, $S_{1j}$. 
\State $\bullet$  $P_j$,$\cdots$, $S_{1j}$ known. Solve for $\Delta\tau$,
\State $K=D_j + \Delta K$, $P=P_j+\Delta P$, $\ldots$, $S_1=S_{1j}+\Delta S_1$.
\State \hspace{2cm} $\underset{(K,\tau)}{\te{argmin}} \ h_k(z_j+\Delta\tau)$
\State \hspace{2cm} s.t. \ \    $(K,\tau) \in \mc{X}_{(K_k,\tau_k)}.$
\State $\bullet$ Solve for auxiliary variables $P$, $\ldots$, $S_{1}$
\State \ \ \ \ \ \  minimize \ $ \|P_k-P\|^2 + \cdots + \|S_{1k}-S_{1}\|^2 $
\State \ \ \ \ \ \ \ \ \ \  s.t. \ \ \ \ \  \ $\Phi_L(K,\tau)\succeq \epsilon I, \  \Psi_L(\tau)\succeq \epsilon I$
\State \textbf{Result:} $\mu=(\tau-z_j)/g_j$, $K_{k+1}=K$,  
\State \hspace{1.1cm} $P_{k+1}=P$, $\ldots$, $S_{1,k+1} = S_{1}$. 
\end{algorithmic}
 \end{algorithm}
\par\noindent We next use Theorem \ref{theorem2} to develop an algorithm starting from $(K_o,\tau^m_o)$, where $\Delta K$ and $\Delta \tau$ are evaluated iteratively such that we eventually obtain a stabilizing pair $(K',\tau')$ that fully satisfies \eqref{pairrequired}, i.e., $\tau'=\mc{Z}(\Card(K'),c,\tau_p)$. Equations \eqref{constraint1}-\eqref{constraint6} in Theorem \ref{theorem2} provide the set of constraints that are required to search for a new stabilizing pair $(K_{k+1}=K_k+\Delta K_k,$ $\tau_{k+1} = \tau_k + \Delta \tau_k)$ in the $(k+1)$-th iteration of this algorithm. We incorporate these constraints using the conjugate gradient Steighaug's method \citep{steihaug} as shown next.
\par Let us define $f(\tau)=(T-\tau)^2$, where $T$ is a constant. Starting from the known stabilizing pair $(K_k,\tau_k)$ for \eqref{delayedstatespaceMain}, $T$ can be chosen as $T\geq \mc{Z}(\Card(K_k),c,\tau_p)$ and the $(k+1)$-th iteration of the algorithm can be obtained as the solution of the following program:
\begin{subequations}
\label{optimalgo1}
\begin{align}
\boldsymbol{\mc{O}}_2 :  \ & \underset{(K,\tau)}{\textbf{minimize}}  \ \  \ h_k(\tau)= \ f({\tau}_k) + \nabla f({\tau}_k)^T (\tau-\tau_k)\nn\\[-10pt]
&\hspace{3.7cm}+ \|\tau-{\tau}_k\|^2\\[5pt]
&\textbf{subject to}  \hspace{1cm} \|\tau-{\tau}_k\| \leq \Delta,\\
&  \hspace{2.6cm}  (K,\tau)\in \mc{X}_{(K_k,{\tau}_k)},
\end{align}
\end{subequations}
where $\Delta$ is the radius of the trust-region, and  $\mc{X}_{(K_k,{\tau}_k)}$ is a subset of the stabilizing pairs from Theorem \ref{theorem2}. Algorithm \ref{algo1} describes the iterative solution for \eqref{optimalgo1}. Details on the practical choice of $\epsilon$, $\eta$ and $\gamma$ can be found in \citet{nandini}. The following Theorem provides the conditions under which Algorithm \ref{algo1} converges.
\begin{theorem}
Using Algorithm \ref{algo1}, a sequence $\{\Delta {\tau}_k\} \subset \mathbb{R}_{+}$ is established in $k$ iterations with $\lim_{k\to\infty} \Delta{\tau}_{k} = 0$ if $\eta$ and $\gamma$ are chosen such that the convex program in Theorem \ref{theorem2} is feasible for iterations $\{1,2,\ldots,k\}$.
\end{theorem}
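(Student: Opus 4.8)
The plan is to show that the sequence $\{\Delta\tau_k\}$ is well-defined, nonnegative, monotonically decreasing, and bounded, so that it converges, and then to argue the limit must be zero. First I would establish that, under the feasibility hypothesis on $\eta$ and $\gamma$, each iteration of Algorithm \ref{algo1} is well-posed: the inner $\mu$-finding program produces a pair $(D_j,z_{j+1})$ with $z_{j+1}\in\mc{X}_{(D_j,z_j)}$, so by Theorem \ref{theorem2} every iterate $(K_k,\tau_k)$ is a genuine stabilizing pair for \eqref{delayedstatespaceMain}. Combined with the constraint \eqref{constraint3}, i.e. $\gamma\geq\Delta\tau$ with $\Delta\tau\in\mathbb{R}_+$, this gives $\Delta\tau_k\geq 0$ for all $k$, so $\{\Delta\tau_k\}\subset\mathbb{R}_+$ as claimed. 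I would also note that the trust-region radius $\Delta$ and the merit function $h_k$ keep the update $\tau_{k+1}=\tau_k+\Delta\tau_k$ inside a region where the linearization $h_k(\tau)=f(\tau_k)+\nabla f(\tau_k)^T(\tau-\tau_k)+\|\tau-\tau_k\|^2$ of $f(\tau)=(T-\tau)^2$ is a valid local model.

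Next I would bring in the Steihaug conjugate-gradient structure of the inner loop. The key quantitative fact is that solving $\bs{\mc{O}}_2$ produces, at each outer step, a guaranteed decrease in $f$: since $h_k$ is a convex quadratic model of $f$ that agrees with it to first order at $\tau_k$, a standard trust-region/Cauchy-point argument gives $f(\tau_k)-f(\tau_{k+1})\geq \rho\,\|\nabla f(\tau_k)\|\min\{\Delta,\|\nabla f(\tau_k)\|\}$ for some $\rho>0$, as long as the feasible set $\mc{X}_{(K_k,\tau_k)}$ is nonempty — which is exactly the feasibility hypothesis on $\eta,\gamma$. Because $f(\tau)=(T-\tau)^2\geq 0$ is bounded below, the telescoping sum $\sum_k \big(f(\tau_k)-f(\tau_{k+1})\big)$ converges, forcing $\|\nabla f(\tau_k)\|=2|T-\tau_k|\to 0$, hence $\tau_k\to T$. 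Since $\{\tau_k\}$ is a convergent (indeed monotone, as each step moves $\tau$ toward $T$) sequence, consecutive differences vanish: $\Delta\tau_k=\tau_{k+1}-\tau_k\to 0$. This is the statement to be proved.

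The step I expect to be the main obstacle is making the per-iteration decrease rigorous while the auxiliary Lyapunov-type matrix variables $P,Q_0,\dots,S_1$ are being re-solved simultaneously inside $\mc{X}_{(K_k,\tau_k)}$. The set $\mc{X}_{(K_k,\tau_k)}$ is defined through the relaxed constraints \eqref{constraint1}--\eqref{constraint6}, which couple $\Delta\tau$, $\Delta K$, and all the $\Delta P,\dots,\Delta S_1$ via the norm bounds involving $\mc{B}_i$ and $\mc{C}_1$; I would need to verify that this set, under the chosen $\eta$ and $\gamma$, always contains a point with $\tau$ strictly on the $T$-side of $\tau_k$ whenever $\tau_k\neq T$, so that the Cauchy-decrease argument does not stall prematurely, and that the ``$z'_{j+1}$ is stable'' early-termination branch and the gradient-norm branch in Algorithm \ref{algo1} are consistent with this decrease. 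In other words, the delicate part is translating ``the convex program in Theorem \ref{theorem2} is feasible'' into ``the trust-region model yields a uniformly bounded-below fraction of the Cauchy decrease,'' after which the convergence $\Delta\tau_k\to 0$ follows from the classical Steihaug--Toint analysis \citep{steihaug} applied to the bounded-below objective $f$.
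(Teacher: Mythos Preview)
Your proposal is correct and follows essentially the same approach as the paper. The paper's own proof is much terser: it simply notes that the constraint set in $\bs{\mc{O}}_2$ satisfies a linear constraint qualification, that feasibility of the convex program of Theorem~\ref{theorem2} at each step ensures $(K_{i+1},\tau_{i+1})\in\mc{X}_{(K_i,\tau_i)}$ with $\Delta\tau_i\in\mathbb{R}_+$, and then directly invokes \citet[Theorem~3.3]{steihaug} for global convergence --- precisely the Steihaug--Toint trust-region result whose Cauchy-decrease mechanism you have unpacked in detail.
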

\begin{proof}
 Constraint set in \eqref{optimalgo1} satisfies linear constraint qualification. Additionally, $(K'_{i+1},{\tau}'_{i+1})\in \mc{X}_{(K'_{i},{\tau}'_{i})}$ and $\Delta \tau_i \in \mathbb{R}_{+}$ is ensured for every $i$-th iteration through \eqref{constraint} and the assumption on the values of $\eta$ and $\gamma$. Therefore, using \citet[Theorem 3.3]{steihaug}, global convergence of \eqref{algo1} is guaranteed.
\end{proof}
\par\noindent If at any $k$-th iteration, we obtain a pair $(K_k,\tau_k)$ such that $K_k$ stabilizes \eqref{delayedstatespaceMain} for $\tau^*_k=\mc{Z}(\Card(K_k),c,$ $\tau_p)$, then the test in \eqref{pairrequired} is passed. In that case, we stop Algorithm \ref{algo1}, set $(K',\tau')\equiv (K_k,\tau^*_k)$ and allocate a final bandwidth as $c_f=c$. However, depending on where we start from, Algorithm 1 may converge to a pair $\lim_{k\to\infty}(K_k,\tau_k)$ such that $K_k$ stabilizes \eqref{delayedstatespaceMain} for a stable delay interval $[\tau_{k-},\tau_{k+}]$ ($\tau_k$ belongs to this interval) but does not stabilize it for $\tau^*_k=\mc{Z}(\Card(K_k),c,\tau_p)$. In that case, instead of re-running the algorithm with a different initial pair $(K_o,\tau^m_o)$, an easier alternative is to set $(K',\tau')\equiv (K_k,\tau_{k+})$ if $\tau^*_k > \tau_{k+}$ or $(K',\tau')\equiv (K_k,\tau_{k-})$ otherwise, and update $c$ to $c'=\kappa\frac{\Card(K')}{\tau'-\tau_p}$. The final bandwidth is then updated as $c_f=c'$. We next proceed to Algorithm \ref{algo2} with $(K',\tau')$ as the initial condition satisfying \eqref{pairrequired} with bandwidth $c_f$.
\section{Main Algorithm}
\label{sec:mainalgo}
\subsection{\texorpdfstring{$\mc{H}_2$}{H2} norm of the Delayed Model} 
\label{DelayedH2norm}
We use the method of spectral discretization for approximating the time-delayed system \eqref{delayedstatespaceMain} to a finite dimensional LTI system, as derived in \citep{spectral}. $\mc{H}_2$ norm of the original time-delayed system is then approximated as that of this LTI system. Considering any stabilizing tuple $(K,\tau)$, the interval $[-{\tau},0]$ is first divided into a grid of $\mc{N}$ Chebyshev extremal points
\begin{subequations}
\label{theta}
\begin{align}
&\begin{bmatrix}
\theta_1=-{\tau},  & \cdots,  &\theta_\mc{N} = 0
\end{bmatrix}^T, \ \mathcal{M}_i=\cos\left(\pi - \frac{i \pi}{\mc{N}-1}\right), \\ &\mathcal{\theta}_{i+1}=\frac{{\tau}}{2} ( \mathcal{M}_i - 1), \ i \in \{0,1,\ldots,\mc{N}-1\}.
\end{align}
\end{subequations}
We define function segments $\phi(\theta) = x(t+\theta) , \ -{\tau} \leq \theta \leq 0$, which are used to represent both the past (tail) state  $\phi(\theta_i)$, $i\in\mathbb{N}_{\mc{N}-1}$ and the present (head) state $x(t)$ in the augmented state vector $\zeta\in\mathbb{R}^{\mc{N}n}$ as
\begin{equation}
\zeta = \left[\underbrace{\phi^T(-\tau), \ \phi^T(\theta_2), \cdots, \ \phi^T(\theta_{N-1})}_{tail},  \ \underbrace{x^T(t)}_{head} \right]^T.
\end{equation}
Let $\mc{W}:=\mc{L}_2([-\tau,0],\mathbb{R}^n)\times \mathbb{R}^n)$ be defined as the vector space containing $\phi(\theta)$, $-\tau\leq\theta\leq 0$. The discretization in \eqref{theta} replaces the infinite dimensional space $\mc{W}$ by $\mathbb{W}$, which is the space of discrete functions over the $\mc{N}$ points in \eqref{theta}. A linear operator $\mc{A}$ is defined to act on $\zeta=(\phi,x)\in\mathbb{W}$ such that it mimics the behavior of the delayed system \eqref{delayedstatespaceMain} in the following way:
\begin{equation}
\mathcal{A}\zeta = \mathcal{A}\begin{bmatrix}
\phi(\theta_1=-\tau) \\ \vdots \\ x(t)
\end{bmatrix} = \begin{bmatrix}
\partial_{\theta_1} \phi(\theta_1=-\tau) \\\vdots \\ A x(t) + K \phi(\theta_1=-\tau)
\end{bmatrix}.
\end{equation}
Let us define matrices $M_i$, $i\in\mathbb{N}_\mc{N}$ as follows: 
\begin{equation}
M_i=\left[I_n e_{1}, \ \cdots, \ I_n e_{j=i}, \  \cdots, \ I_n e_{\mc{N}} \right]^T, \label{Mdef} 
\end{equation}
where $e_j = 1$ if $j=i$ and $0$ otherwise. The approximated LTI system can then be written as
\begin{align}
\label{augmentedsystem}
\dot{\zeta}(t) = \mc{A}\zeta(t) + \mc{B} w(t),
\end{align}
where $\mc{B}=M_\mc{N} B_w \in \mathbb{R}^{\mc{N}n\times p}$ is the exogenous input matrix for this augmented system. $\mc{A} \in\mathbb{R}^{\mc{N}n\times \mc{N}n}$ represents the closed-loop state matrix, which can be further decomposed as
\begin{align}
\mc{A} = \tilde{A}(\tau) - M_\mc{N} B K M^T_1, \label{closedloopA}
\end{align}
where $\tilde{A}(\tau)$ is defined using \eqref{theta} as
\begin{align}
&\tilde{A}_{ij} =\begin{cases}
q_j(\theta_i) I_n, \ j\in\mathbb{N}_\mc{N}, \ i\in\mathbb{N}_{\mc{N}-1}\\
A, \ \hspace{1.1cm} i=j=\mc{N}\\
0, \ \hspace{1.1cm} \text{else},
\end{cases}\nn\\
&q_i(\theta_i) = \frac{1}{\theta_{ji}}\prod_{m=1,\ m\neq i, j}^\mc{N} \frac{\theta_{im}}{\theta_{jm} }, \ q_j(\theta_i) = \sum_{m=1, \ m\neq j}^\mc{N} \frac{1}{\theta_{jm}},
\end{align}
and $\theta_{pq}=\theta_p - \theta_q$. The $\mc{H}_2$ norm of the time-delayed model can finally be approximated by the $\mc{H}_2$ norm of \eqref{augmentedsystem} which is defined as follows:
\begin{subequations}
\label{Jtauexpression}
\begin{align}
&J_\tau(K,\tau) = \text{Trace} (\mathcal{C} \mc{L}\mathcal{C}^T) = \text{Trace}(\mathcal{B}^T \mc{P}\mathcal{B})\label{H2tau}\\
&\mc{A}(K,\tau) \mc{L} + \mc{L} \mc{A}^T(K,\tau) = -\mathcal{B} \mathcal{B}^T, \label{lyapL} \\
&\mc{A}^T(K,\tau) \mc{P}  + \mc{P}\mc{A}(K,\tau) = - \mathcal{C}^T \mathcal{C},\label{lyapP}\\
&\mathcal{C}^T \mathcal{C} =\tilde{Q} + M_1 K^T R K M^T_1,
\end{align}
\end{subequations} 
where $\tilde{Q}=\texttt{BlkDiag}\left(\bs{0},Q\right)$, and $\mc{P},\mc{L} \in \mathbb{R}^{\mc{N}n \times \mc{N}n}$ are positive-definite Lyapunov matrices. The $\mc{H}_2$ norm $J_\tau$ is finite and positive for a stable delayed system. We next use this expression as the objective function in the ADMM formulation for solving $\bs{\mc{O}}_1$.
\subsection{Design Formulation using ADMM}
As $\bs{\mc{O}}_1$ has two conflicting objectives, namely, sparsity promotion and minimization of the $\mc{H}_2$ norm, it fits the structure of ADMM-based optimization. \citet{mihailo} formulated an ADMM based design for the delay-free LTI system. In our problem, we include the added objective of maintaining stability of \eqref{delayedstatespace} while updating the delay as a function of sparsity. We relax $\bs{\mc{O}}_1$ in the ADMM form as follows:
\begin{subequations}
\label{H2new}
\begin{align}
\boldsymbol{\mc{O}}_3 : \ &\underset{K,G}{\text{minimize}}  \ \ J_\tau(K,\tau) + \lambda g(G)  \label{H21new}\\ 
& \ \ \  \ \ \ \text{s.t.} \ \ \  \  K-G=0 , \\
& \hspace{1.5cm}\tau = \mc{Z}\left(\Card(K),c_f,\tau_p\right), \label{addmconstraint}
\end{align}
\end{subequations}
where $c_f$ and $\tau_p$ are known constants, and $\lambda$ is a known regularization parameter. Due to the non-convex nature of $\Card(K)$, sparsity is promoted through its closest convex approximation, which is the weighted $l_1$ norm $g(G)$. The norm is computed iteratively as
\begin{equation}
\label{wl1}
g(G_{i+1})=\sum_{p,q} W_{p,q} |{G_{i+1}}_{p,q}|, \  W_{p,q}=(|{G_{i}}_{p,q}| +\epsilon_{1})^{-1}, \end{equation}
where $G_{i}$ is known from the $i$-th iteration and $0<\epsilon_1 \ll 1$. $\bb{\mc{O}}_3$ is solved using two loops. The inner ADMM loop will update $K$ and $G$ assuming $\tau$ to be constant while the outer loop will update $\tau$ according to \eqref{addmconstraint}, as shown shortly in Section \ref{sec:changingdelaywithsparsity}. The augmented Lagrangian for the inner ADMM loop can be written as:
\begin{equation}
\mc{L}_{p}=J_\tau(K) + \lambda g(G) + \text{Tr}\big(\Lambda^{T}(K-G)\big) + \frac{\rho}{2}\|K-G\|^{2}_{\te{F}},
\end{equation}
where $\Lambda \in \mathbb{R}^{m\times n}$ is the dual variable and $\rho$ is a constant. $\lambda$ is increased in steps to allow gradual sparsity promotion. For the rest of this subsection, we continue with the formulation of the inner-loop. The outer-loop for $\tau$-update is presented in the next subsection.
\par\noindent At any $(k+1)$-th iteration of the inner loop, the algorithm alternates between the following three minimization problems:
\par\noindent $\bullet$ $K$-\textbf{min} : Minimizes the $\mc{H}_2$ norm of \eqref{delayedstatespaceMain} as
\begin{equation}
\label{kmin}
K_{k+1} = \underset{K_{k+1}}{\te{argmin}} \ \ J_{\tau}(K_{k+1}) + \frac{\rho}{2}\|K_{k+1}-G_{k}+\frac{1}{\rho}\Lambda_{k}\|^2_{\te{F}}.
\end{equation}
$\bullet$ $G$-\textbf{min} : Minimizes the weighted $l_1$ norm as:
\begin{equation}
\label{gmin}
G_{k+1} = \underset{G_{k+1}}{\te{argmin}} \ \ \lambda g(G_{k+1}) + \frac{\rho}{2}\|G_{k+1}-K_{k+1}-\frac{1}{\rho}\Lambda_{k}\|_{\te{F}}.
\end{equation}
$\bullet$ $\Lambda$-\textbf{Update} : $\Lambda_{k+1} = \Lambda_k + (K_{k+1} - G_{k+1})$.

\par\noindent The $K$-min step can be solved using Anderson-Moore method \citep{lqrgrad}. This method requires the gradient of $J_\tau$, which is derived as follows.
\begin{theorem}
\label{theorem4}
The gradient of $J_\tau(K)$ is given as:
\begin{equation}
\nabla J_\tau (K) = 2 ( R K M^T_\mc{N} \mc{L}(K) M_\mc{N} - B^T M^T_\mc{N} \mc{P}(K) \mc{L}(K) M_1 ), \label{grad}
\end{equation}
where $\mc{L}(K)$ and $\mc{P}(K)$ are obtained from \eqref{lyapL} and \eqref{lyapP}, respectively.
\end{theorem}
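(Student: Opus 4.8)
The plan is to derive \eqref{grad} by the classical Gramian-based differentiation of an $\mc{H}_2$ cost, specialized to the spectrally discretized model \eqref{augmentedsystem}. Throughout this step $\tau$ is held fixed (the delay is updated only in the outer loop), so $\tilde A(\tau)$, $\mc{B}$ and the Chebyshev grid are constants, and the entire $K$-dependence enters through $\mc{A}(K)=\tilde A(\tau)-M_{\mc{N}}BKM_{1}^{T}$ from \eqref{closedloopA} and through $\mc{C}^{T}\mc{C}=\tilde Q+M_{1}K^{T}RKM_{1}^{T}$. I would first record that on the open set of gains for which $\mc{A}(K)$ is Hurwitz --- which is exactly where $J_\tau$ is finite and where \eqref{lyapL}--\eqref{lyapP} admit unique positive-definite solutions $\mc{L}(K),\mc{P}(K)$ --- the solution map $\mc{A}\mapsto(\mc{L},\mc{P})$ is analytic, so $J_\tau$ is smooth there and its gradient can be obtained from a first-order perturbation.

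Next I would perturb $K\mapsto K+\delta K$ and linearize: $\delta\mc{A}=-M_{\mc{N}}B(\delta K)M_{1}^{T}$ and $\delta(\mc{C}^{T}\mc{C})=M_{1}(\delta K)^{T}RKM_{1}^{T}+M_{1}K^{T}R(\delta K)M_{1}^{T}$. Using the form $J_\tau=\tr(\mc{P}\,\mc{B}\mc{B}^{T})$ gives $\delta J_\tau=\tr\!\big((\delta\mc{P})\mc{B}\mc{B}^{T}\big)$, which still carries the implicitly defined $\delta\mc{P}$. To remove it, I would differentiate the Lyapunov equation \eqref{lyapP},
\[
\mc{A}^{T}(\delta\mc{P})+(\delta\mc{P})\mc{A}=-\,\delta(\mc{C}^{T}\mc{C})-(\delta\mc{A})^{T}\mc{P}-\mc{P}(\delta\mc{A}),
\]
multiply on the right by $\mc{L}$, take the trace, and use \eqref{lyapL} together with the cyclic invariance of the trace and the symmetry of $\mc{L}$; the left-hand side then collapses to $\tr\!\big((\delta\mc{P})(\mc{A}\mc{L}+\mc{L}\mc{A}^{T})\big)=-\tr\!\big((\delta\mc{P})\mc{B}\mc{B}^{T}\big)=-\delta J_\tau$. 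This yields the adjoint identity $\delta J_\tau=\tr\!\big(\delta(\mc{C}^{T}\mc{C})\,\mc{L}\big)+\tr\!\big((\delta\mc{A})^{T}\mc{P}\mc{L}\big)+\tr\!\big(\mc{P}(\delta\mc{A})\mc{L}\big)$, which now involves only the known matrices $\mc{P}$ and $\mc{L}$. (Equivalently, one may start from $J_\tau=\tr(\mc{C}\mc{L}\mc{C}^{T})$, perturb $\mc{L}$, and eliminate $\delta\mc{L}$ against $\mc{P}$; both routes give the same expression.)

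Then I would substitute $\delta\mc{A}$ and $\delta(\mc{C}^{T}\mc{C})$ into this identity and, by repeated cyclic permutations under the trace together with the symmetry of $\mc{P}$, $\mc{L}$ and $R$, bring every term to the canonical form $\tr\!\big((\delta K)^{T}(\,\cdot\,)\big)$. The two $\delta\mc{A}$ contributions combine into $-2\,\tr\!\big((\delta K)^{T}B^{T}M_{\mc{N}}^{T}\mc{P}\mc{L}M_{1}\big)$, while the $\delta(\mc{C}^{T}\mc{C})$ contribution supplies the remaining term proportional to $RK$; collecting the coefficient of $(\delta K)^{T}$ then reproduces \eqref{grad}. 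Because this coefficient is, by construction, the gradient of $J_\tau$ in the Frobenius inner product, it is exactly the gradient required by the Anderson--Moore iteration in the $K$-min step \eqref{kmin}.

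I expect the one genuine difficulty to be bookkeeping rather than mathematics: carrying the selection matrices $M_{1}$ and $M_{\mc{N}}$ --- which inject $K$ into, and read the adjoint sensitivity out of, the head and tail blocks of the $\mc{N}n$-dimensional augmented state --- faithfully through the chain of cyclic permutations and transpositions, while keeping track of which factors are symmetric and of exactly which of the $\mc{N}$ block positions each $M_i$ selects. Everything else --- the well-posedness and smoothness of $\mc{L}(K)$ and $\mc{P}(K)$ on the stabilizing set, and the adjoint trace identity above --- is routine Lyapunov-equation calculus.
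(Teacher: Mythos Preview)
Your proposal is correct and follows essentially the same route as the paper's own proof: perturb \eqref{lyapP}, pair it against $\mc{L}$ via the trace, use \eqref{lyapL} to eliminate $\delta\mc{P}$, and then read off the coefficient of $(\delta K)^{T}$. The paper carries out exactly these steps (in a slightly different order and with terser bookkeeping), so your plan would reproduce \eqref{grad} without any additional ideas.
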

\begin{proof}
Taking the partial derivative of \eqref{lyapP}, we get
\begin{align}
\mc{A}^T \partial \mc{P}+ \partial \mc{P} \mc{A} =  &M_1 \partial K^T B^T M^T_\mc{N} \mc{P} + \mc{P} M_\mc{N} B \partial K M_1^T\nn\\
 - M_\mc{N}\partial &K^T R K M^T_\mc{N} - M_\mc{N} K^T R \partial K M^T_\mc{N}. \label{eqA}
\end{align}
Partial derivative of the $\mathcal{H}_2$ norm in \eqref{H2tau} is
\begin{subequations}
\label{partial}
\begin{align}
 &\partial J_\tau(K) \ dK =\te{Tr}(\partial \mc{P} \mc{B} \mc{B}^T) = \te{Tr} ( \mc{B}^T \partial \mc{L} \mc{B} ).\label{eqC} \\
&\te{Also,} \ \ \ \partial J_\tau(K) \ dK = \text{Tr}(\nabla J_\tau(K)^T dK). \label{eqB}
\end{align}
\end{subequations}
Multiplying \eqref{lyapP} by $\partial \mc{P}$ and taking its trace, we get
\begin{align}
\te{Tr}\left(\mathcal{B}^T \partial \mc{P} \mathcal{B}\right) = -\te{Tr} \left(\partial \mc{P} \mc{A} \mc{L}   +  \mc{A}^T \partial \mc{P} \mc{L} \right). \label{eqD}
\end{align}
\par Using \eqref{partial} and \eqref{eqD}, we can write
\begin{align}
\text{Tr}(dK^T \nabla J_\tau(K)) = -\te{Tr} \left(\partial \mc{P} \mc{A} \mc{L}   +  \mc{A}^T \partial \mc{P} \mc{L} \right). \label{eqE} 
\end{align}
We next multiply \eqref{eqA} by $\mc{L}$ and take its trace. Using \eqref{eqE}, $\text{Tr}(dK^T \nabla J_\tau(K))$ is equivalent to
\begin{align}
& \te{Tr} (-2 \partial K^T B^T_2 M^T_\mc{N} \mc{P} \mc{L} M_1 + 2 \partial K^T R K M^T_\mc{N} \mc{L} M_\mc{N}).
\end{align}
The final expression for the gradient \eqref{grad} follows from the above equation.
\end{proof}
\par\noindent We next use Theorem \ref{theorem4} to derive the solution of the $K$-min step, followed by the solution of the $G$-min step.
\subsubsection{Solving the \texorpdfstring{$K$}{K}-min Step}
\label{sectionkmin}
To solve the $K$-min step, we employ the Anderson-Moore method using the gradient of the objective function in \eqref{kmin}. Denoting $V_k = G_k - \frac{1}{\rho} \Lambda_k$ and holding $\tau$ constant for this step, we use Theorem \ref{theorem4} to derive the necessary optimality condition for \eqref{kmin} as:
 \begin{align*}
\frac{\rho {R}^{-1}}{2} K^{+} + K^{+} (M^T_N \mc{L} M_N) -R^{-1} \big(B^T M^T_N \mc{P}\mc{L}M_1 + \frac{\rho}{2} V_{k}\big)=0,
\end{align*}
where $K^{+}$ is the descent direction, and $\mc{L}(K_k)$ and $\mc{P}(K_k)$ are the solutions of \eqref{lyapL} and \eqref{lyapP}, respectively. The Armijo-Goldstein rule for determining the step-size guarantees $\lim_{k\to \infty} \|\nabla J_\tau (K_k)\|=0$ \citep[Theorem 4.2]{lqrgrad}.
 \subsubsection{Solving the \texorpdfstring{$G$}{G}-min Step}
\label{sectiongmin}
The solution of the $G$-min step is well-known in the literature, and is given by the soft-thresholding operator \citep{boyd}. Let $K_{pq}$ be the $(p,q)$-th element of the solution of the $K$-min step. Denoting $U_{pq}=\Lambda_{pq} + \rho K_{pq}$, the solution of the $G$-min step is given as
\begin{equation}
G_{pq} = \begin{cases}
\frac{1}{\rho}\left(1-\frac{\lambda W_{pq}}{|U_{pq}|}\right)U_{pq}, \ \ \te{If} \  {|U_{pq}|} > {\lambda} \\
0, \ \ \hspace{2.5cm}   \te{otherwise}.
\end{cases}\label{gmin2}
\end{equation} 
A high value of $\lambda$ favors the promotion of zero entries in $G$. The ADMM loop terminates in $K\in\mc{I}_s$ upon reaching the convergence criteria given in \citet[Section 3.3]{boyd}, where $\mc{I}_s$ represents a set of matrices with a specific sparsity structure. A polishing step will be introduced in Section \ref{sec:polishingstep} for further improving $K$.
\subsection{Update of \texorpdfstring{$\tau$}{tau}}
\label{sec:changingdelaywithsparsity}
\par The delay $\tau$ is assumed to be constant in the inner loop since it is a function of $\Card(K)$ (and not $K$). Moreover, Theorem \ref{theorem4} can guarantee a solution of $K$-min only if $\tau$ remains constant. For these reasons, $\tau$ is updated with respect to $\Card(K)$ (i.e., according to \eqref{addmconstraint}) at the beginning of every re-weighting step \eqref{wl1}, which we refer to as the outer loop. Thus, at every outer loop iteration, sparsity level of $K$ decides the value of $\tau$ and, in turn $\tau$ affects the future value of sparse $K$ through the three steps of ADMM. This is another difference between our proposed algorithm and that in \citet{mihailo}.
 \par Before proceeding with the outer loop design, we first recall a fundamental property from the stability theory of delayed systems following \citet[Chapter 2]{stability}. The characteristic polynomial for \eqref{delayedstatespace} with $w=0$ is
\begin{equation}
 p(j\omega;e^{-j\tau \omega})=\te{det}\left( j\omega I-A+BKe^{-j\tau \omega}\right) . \label{characteristicroots}
\end{equation}
If $\tau^m$ is the delay margin of $(A,B,K)$ in \eqref{delayedstatespace}, then the frequency $\omega$ at which $p(j\omega;e^{-j\tau^m \omega})=0$ becomes the first crossing of the characteristic roots of \eqref{characteristicroots} from the stable left hand complex plane to the unstable right half complex plane, also known as a \textit{zero crossing}. For any given system $(A,B,K)$, multiple but finite zero crossings exist, which indicate that \eqref{delayedstatespace} is also stable for some delay intervals other than $[0,\tau^m]$. We assume that $\mathfrak{n}$ such crossings exist, and denote the crossings as $\omega_k$, $k\in\mathbb{N}_\mathfrak{n}$. Also, we define a dimensionless quantity $\Hat{\omega}_k\in[0,2\pi]$ such that $p(j\omega;e^{-j\hat{\omega}_k})=0$ $\forall \ \omega > 0$. Defining $\nu_k = \nicefrac{\Hat{\omega}_k}{\omega_k}$, we assume without loss of generality that
\begin{equation}
    \nu_0=0<\nu_1=\tau^m < \nu_2 < \cdots < \nu_\mathfrak{n}, \label{nu}
\end{equation}
where $\nu_k$, $k\in\mathbb{N}_{\mathfrak{n}}$. The system in \eqref{delayedstatespace} is stable for any $\tau \in (\nu_k,\nu_{k+1})$ if it is known to be stable for some $\tau^*$ in that interval. As mentioned in Section \ref{sec:preconditioning}, such continuous intervals are called the \textit{stable delay intervals} of  $(A,B,K)$. We next derive the update law for $\tau$ using the concept of stable delay intervals.  
\par\noindent Let the $i$-th re-weighting step start with the stabilizing pair $(K_{i},\tau_{i})$ and end with the stabilizing pair $(K_{i+1},\tau_{i})$, where $\Card (K_{i+1})\leq \Card (K_i)$. In the next iteration, $\tau_i$ should be updated to $\mc{Z}(\Card(K_{i+1}),c_f,\tau_p)$, which is denoted by $\tau^*_{i+1}$. The following scenarios can arise depending on whether the pair $(K_{i+1},\tau^*_{i+1})$ is stabilizing or not:
\begin{description}
\item[4-1] If $(K_{i+1},\tau^*_{i+1})$ stabilizes \eqref{delayedstatespace}, we set $\tau_{i+1}=\tau^*_{i+1}$ and start the ADMM loop. We illustrate this case in \textbf{4-1} panel of Fig. \ref{fig:changingdelay}.
\item[4-2]  If $(K_{i+1},\tau^*_{i+1})$ \textit{does not} stabilize \eqref{delayedstatespace}, then $\tau_{i+1}$ cannot be updated to $\tau^*_{i+1}$ since it does not belong to any stable delay interval of $(A,B,K_{i+1})$. To ensure that the bandwidth does not increase, we look for the closest stable delay intervals towards the \textit{right} of $\tau^*_{i+1}$ as shown in panels \textbf{4-2a} and \textbf{4-2b} of Fig. \ref{fig:changingdelay}. Let us denote the stable interval in which $\tau_i$ belongs to as $[\nu_k,\nu_{k+1}]$, where $k\in\mathbb{N}_\mathfrak{n}$. Two situations can arise in this case. The first situation (panel \textbf{4-2a}) is when the closest stable delay interval on the right of $\tau^*_{i+1}$ is $[\nu_k,\nu_{k+1}]$. We denote the start of this interval by $\nu^*$, and update $\tau_{i+1} =\nu^*$. In the second situation (panel \textbf{4-2b}), the closest stable delay interval to the right of $\tau^*_{i+1}$ is $[\nu_r,\nu_{r+1}]$, where $r<k$. In this case, assuming that one can find this stable interval, we denote $\nu^*=\nu_r$ and set $\tau_{i+1}=\nu^*$. In reality, however, for this case it may be difficult to find the exact location of the nearest interval $[\nu_r,\nu_{r+1}]$ numerically. Since the location of $[\nu_k,\nu_{k+1}]$ is already known from the $i$-th iteration, $\tau_{i+1}$ can be updated to $\nu^*=\nu_k$ for this case as well.
 \end{description}
 \begin{figure}[hbtp]
      \centering
      \includegraphics[scale=0.465]{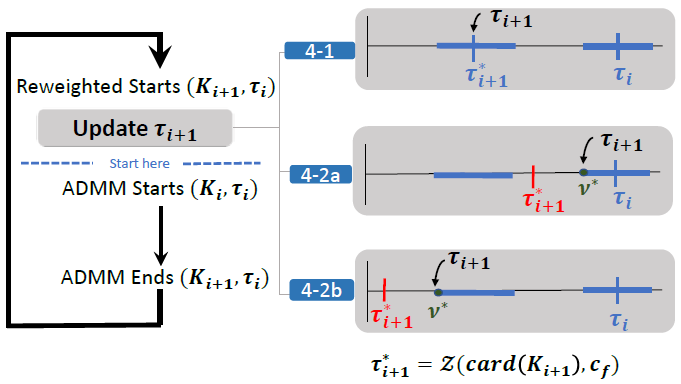}
      \caption{Updating $\tau_{i+1}$ for the $(i+1)$-th reweighted step}
      \label{fig:changingdelay}
  \end{figure}
   \begin{algorithm}[bt]
 \caption{Main Algorithm}
 \label{algo2}
 \begin{algorithmic}[1]
\State \textbf{Input:} Initial feasible point $K'$. Initial delay $\tau'$ such
\Statex that $\tau'= \mc{Z}(\Card(K'),c_f,\tau_p)$.
\For {$\lambda$=$\lambda_i$}
\State \textbf{Set : } $K_0=K'$. 
 \For{Re-weighting Step $i=0$ to $i=r_{max}-1$}
   \State Evaluate $\tau^*_i=\mc{Z}\left(\Card (K_i),c_f,\tau_p\right)$.
   \If{$K_i\in\mc{V}_{\tau^*_i}$}
      \State Update $\tau_i=\tau^*_i$ (\textbf{4-1} in Section \ref{sec:changingdelaywithsparsity}).
   \Else
      \State Update $\tau_i$ according to \textbf{4-2} in Section \ref{sec:changingdelaywithsparsity}.
       \EndIf
      \State \textbf{Input : $(K_i,\tau_i)$}
    \While{ADMM Criteria not satisfied}
         \State $\bullet$ Solve $K$-min in \eqref{kmin} using Section \ref{sectionkmin}.
         \State $\bullet$ Solve $G$-min in \eqref{gmin} using Section \ref{sectiongmin}.
         \State $\bullet$ Update $\Lambda$ using $\Lambda$-update step.
\EndWhile  
\State \textbf{Result : }ADMM ends with $(K_{i+1},\tau_i)$.
\State Carry out the Polishing Step in Section \ref{sec:polishingstep} on
\Statex \hspace{1cm} $(K_{i+1},\tau_i)$ to get $(K'_{i+1},\tau_i)$.
\State Update $W$ from \eqref{wl1}.
\EndFor
\State \textbf{Result : } $K_{i+1}=K'_{i+1}$. 
\EndFor
\State \textbf{Result : } Stabilizing pair $(K_f,\tau_f)\equiv (K_{r_{max}},\tau_{r_{max}})$, where $\tau_f\geq \mc{Z}(\Card(K_f),c_f,\tau_p)$.
 \end{algorithmic}
 \end{algorithm}

\par\noindent Let $(K_f,\tau_f)$ denote the output of Algorithm \ref{algo2}. If \textbf{4-1} is encountered then $\tau_f=\mc{Z}(\Card(K_f),c_f,\tau_p)$. If  \textbf{4-2} is encountered, we may obtain $(K_f,\tau_f)$ such that $\tau_f > \mc{Z}(\Card(K_f),c_f,\tau_p)$, which can be accommodated by decreasing the allotted bandwidth $c_f$. While this saves us bandwidth, it may cost us $\mc{H}_2$ performance as we obtain a controller for a higher feedback delay than necessary. Also, in this case it is easy to verify that the difference between $\tau_f$ and $\tau_f^*$ (where, $\tau_f^*=\mc{Z}(\Card(K_f),c_f,\tau_p)$) is always bounded as
\begin{equation}
\tau_f - \tau^*_f \leq \frac{\Card(K_0)-\Card (K_f)}{c_f} -\epsilon r_{max},
\end{equation}
where $\epsilon=\max_i(\tau^*_{i+1}-\tau_{i+1})$, and $r_{max}$ is the total number of re-weighting steps. 
\subsection{Polishing Step}
\label{sec:polishingstep}
The final step of the inner ADMM loop is to refine its output $K\in\mc{I}_s$ to $K'\in\mc{I}_s$ such that the latter produces the least $J_\tau$  attainable for the sparsity set $\mc{I}_s$. The gradient of $J_\tau(K)$ for fixed $\mc{I}_s$ is given by
\begin{equation}
\nabla J_{\tau}|_{\mc{I}_s}=\nabla J_{\tau} \circ \mc{I}_s.
\end{equation}
The ADMM loop of the $(i+1)$-th re-weighting step produces a stabilizing tuple $(K_{i+1}\in \mc{I}_s,\tau_i)$, as shown in line-17 of Algorithm \ref{algo2}. Newton's method can be used to construct the polishing step with a quadratic convergence rate. Let the output of the polishing step be
\begin{align}
&K'_{i+1}=K_{i+1} + \mathbf{s}\tilde{K}, \label{updateK}\\
\te{where} \ &\tilde{K}=-(\nabla^2(J_\tau))^{-1} \nabla J_{\tau}|_{\mc{I}_s}, \label{Ktilde} 
\end{align}
where $\tilde{K}\in \mc{I}_{s}$ is the Newton's descent direction and $\mathbf{s}$ is the appropriately chosen step-size. Using \citet{lqrgrad}, \eqref{Ktilde} is written in the form of the following necessary optimality condition:
\begin{align}
\resizebox{0.9\linewidth}{!}{$2(R\tilde{K} M^T_\mc{N} \mc{L} M_\mc{N} - B^T M^T_\mc{N} Z_2 \mc{L} M_1 + R K_{i+1} M^T_\mc{N} Z_1 M_\mc{N} - B^T M^T_\mc{N} \mc{P} Z_1 M_1 ) \circ \mc{I}_s + \nabla J_{\tau} \circ {\mc{I}_s}=0,$} \label{nece}
\end{align}
where $\mc{L}(K_{i+1})$ and $\mc{P}(K_{i+1})$ are computed from \eqref{lyapL} and \eqref{lyapP}, respectively. The matrices $Z_1=\partial \mc{L}$ and $Z_2=\partial \mc{P}$ are computed by differentiating \eqref{lyapL} and \eqref{lyapP} to obtain the following equations:
\begin{align}
&\mc{A}Z_1 +Z_1 \mc{A}^T -  G_1- G^T_1=0,  \label{eq51}\\
&\mc{A}^T Z_2  +  Z_2 \mc{A} - G_2 - G^T_2  = 0, \label{eq61}
\end{align}
where $\displaystyle G_1=M_\mc{N} B \tilde{K} M^T_1 \mc{L}$, $G_2=\mc{P} M_\mc{N} B \tilde{K} M^T_1 -M_1 K$ $R\tilde{K} M_1^T$ and $\mc{A}$ follows from \eqref{closedloopA} with $K_{i+1}$ as the feedback gain. We first compute the gradient $\nabla J_\tau (K_{i+1})|_{\mc{I}_s}$ and then solve \eqref{nece}, \eqref{eq51} and \eqref{eq61} for the unknown variables $Z_1$, $Z_2$ and $\tilde{K}$. Beginning from $K_{i+1}$, a sequence of decreasing $\{J_\tau(K'_{i+1})\}$ is generated by using Armijo-Goldstein rule to choose step-size $\mathbf{s}$ in \eqref{updateK}. Conjugate gradient method can be used to compute the Newton direction similar to \citet[Section IV-B]{fardad}.
\section{Examples}
\label{sec:simulations}
We illustrate the effectiveness of algorithms 1 and 2 using numerical examples. We refer to the two algorithms jointly as a \textit{delay-aware} algorithm, as compared to the \textit{delay-agnostic} algorithm of \citet{mihailo} where no feedback delay was considered. We present three case studies, each for the following two conditions:
\par \textbf{S1} $\tau$ is constant.
\par \textbf{S2} $\tau$ changes with $\Card(K)$ following \eqref{Z}.
\begin{figure}[hbtp]
\begin{minipage}{0.5\linewidth}
\centering
\includegraphics[scale=0.4]{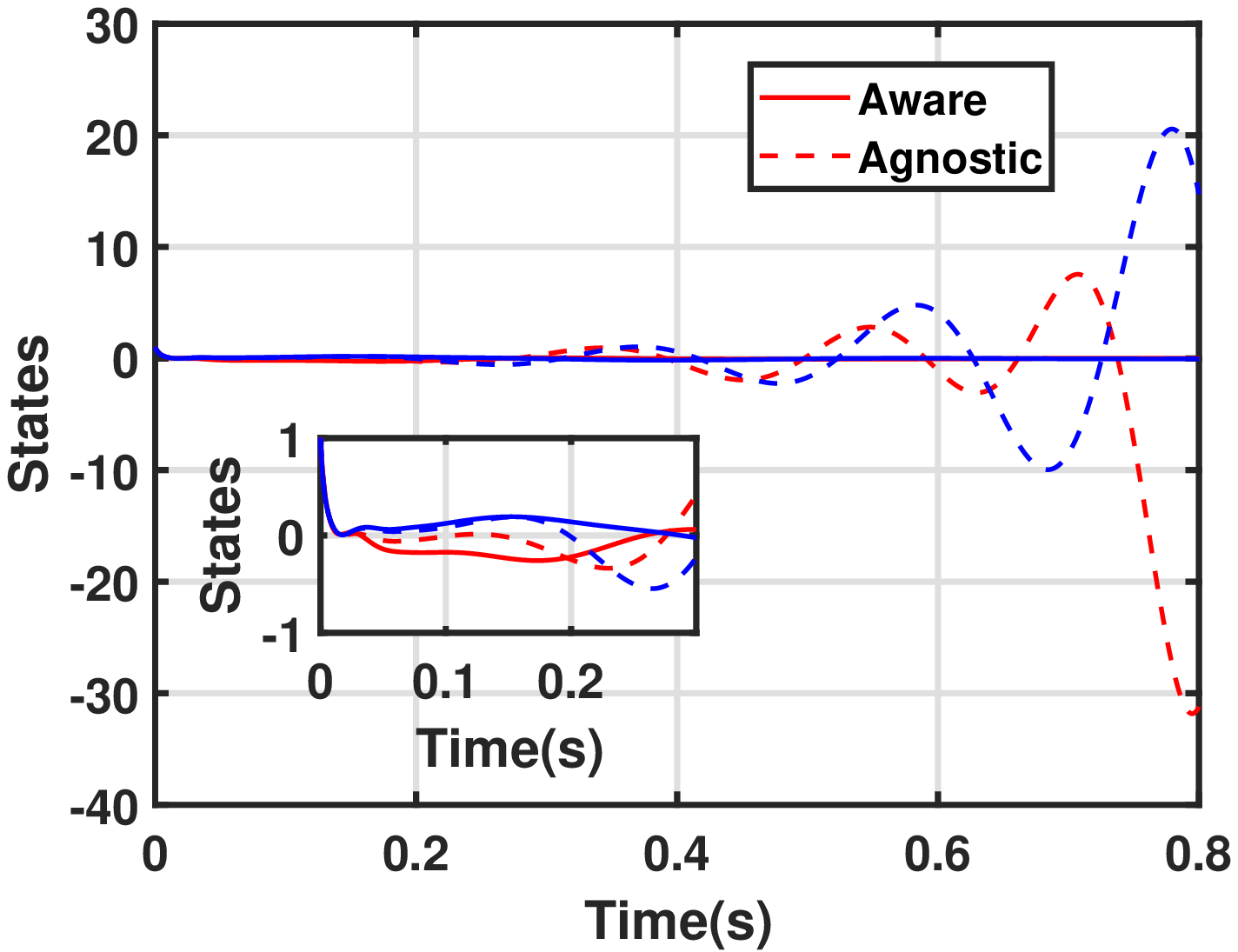}\\
{(a)}
\end{minipage}\begin{minipage}{0.5\linewidth}
\centering
\includegraphics[scale=0.42]{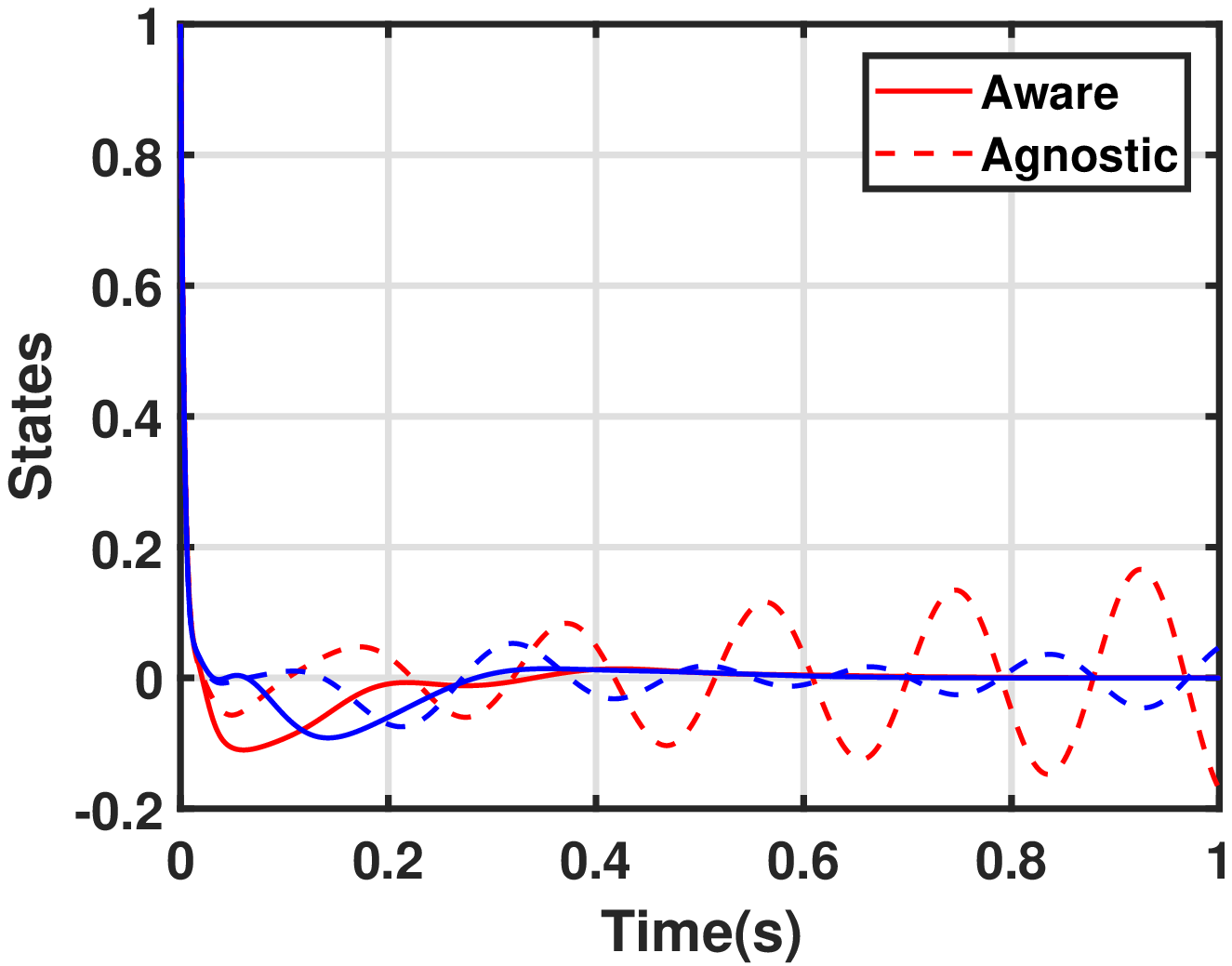}\\
{(a)}
\end{minipage}\\
 \begin{minipage}{0.5\linewidth}
\centering
\includegraphics[scale=0.4]{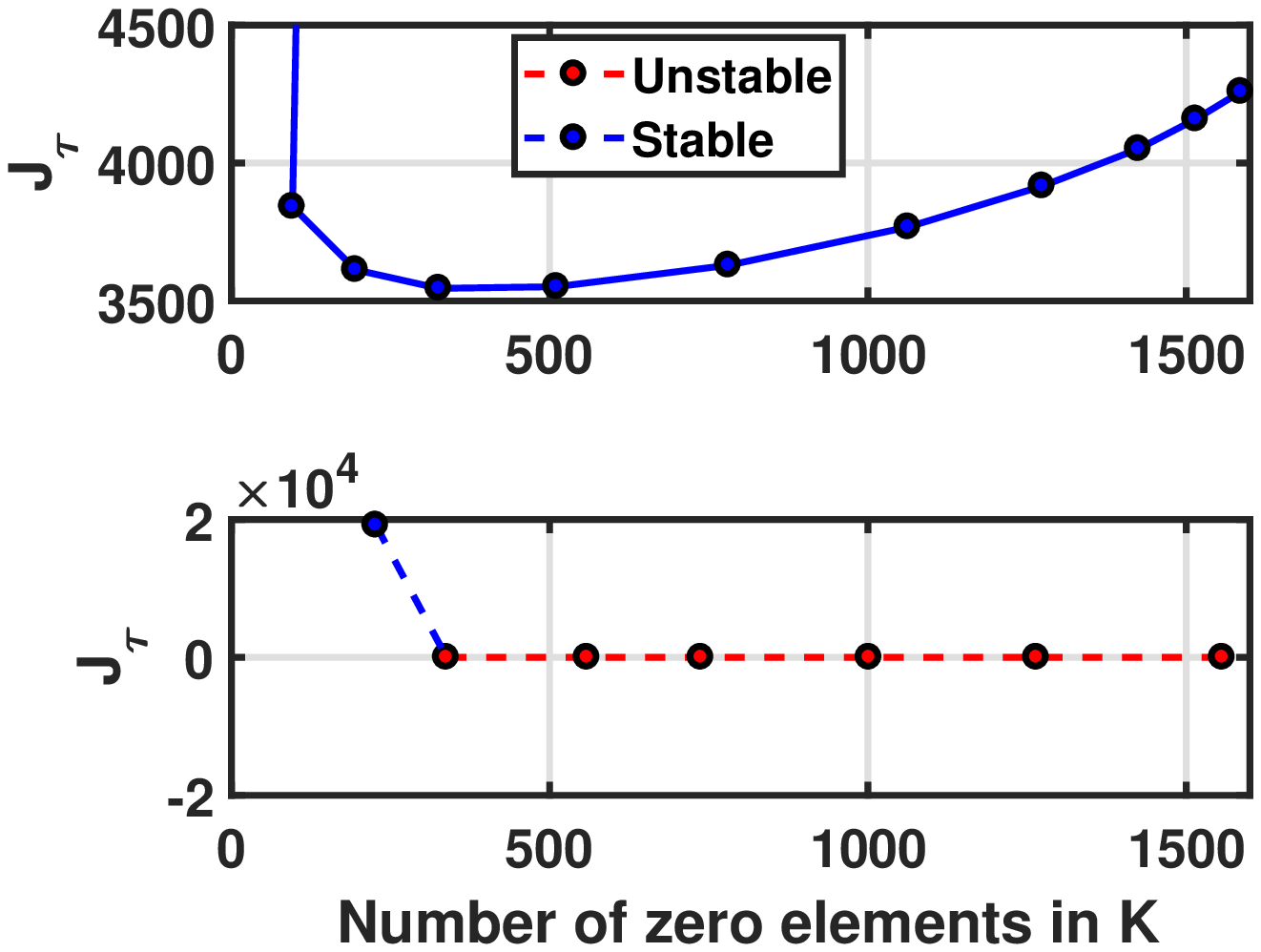}\\
 (b) Delay-aware (upper),\\ \hspace{0.85cm} Delay-agnostic (lower)
 \caption{Case-1 : \textbf{S1}}
 \label{fig1case1}
\end{minipage}\begin{minipage}{0.5\linewidth}
\centering
\includegraphics[scale=0.4]{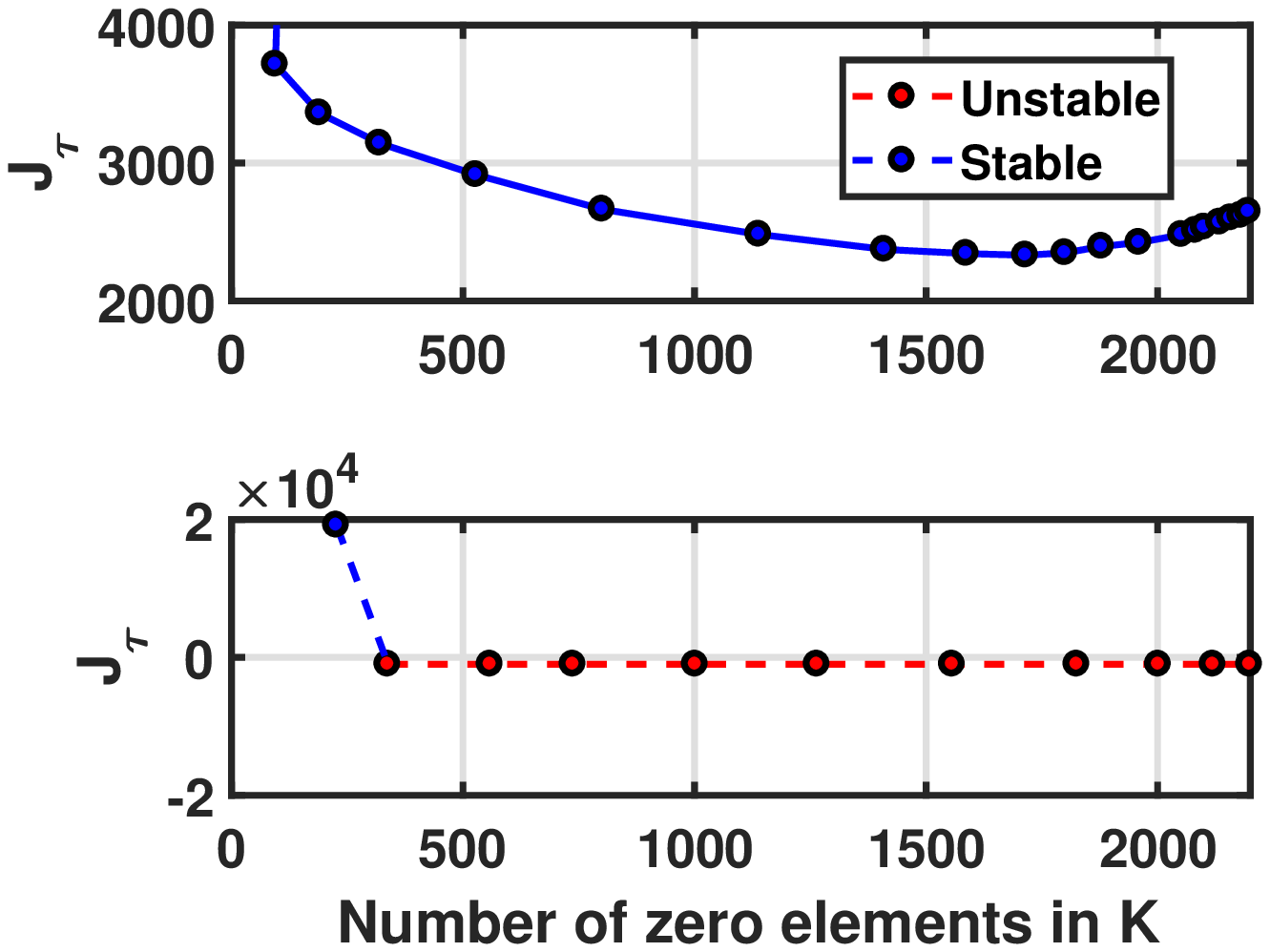}\\
 (b) Delay-aware (upper),\\ \hspace{0.85cm} Delay-agnostic (lower)
 \caption{Case-1 : \textbf{S2}}
 \label{fig2case1}
\end{minipage}
\end{figure}
\subsection{Case-1 : Element-wise Sparsity}
We consider a $50$-th order LTI system with a randomly generated state matrix $A$. The numerical value of $A$ can be found in the data set (Git Hub) in \citet{dataset1}. We assume $B=B_w=I_{50}$, $\tau_p=9.83$ ms, $c=956$ and $\kappa=0.01$. Algorithm \ref{algo1} starts with a full matrix $K_o$ and corresponding $\tau^m =19.7$ ms, and outputs a full $K'$ with the corresponding $\tau'=36$ ms such that $c_f=c$. Algorithm \ref{algo2} then sparsifies $K'$ while maintaining optimality. Fig. \ref{fig1case1}(a) and \ref{fig2case1}(a) compare the dynamic performance of the closed-loop states between the delay-aware and delay-agnostic algorithms for conditions \textbf{S1} (i.e., when $\tau$ is kept constant at $36$ ms), and  \textbf{S2} (i.e, when $\tau$ decreases from $36$ ms for $s=0$ to $12.42$ ms for $s=2252$\footnote{$s$ is the sparsity level of $K$ defined as $s=mn-\Card(K)$.}), respectively.  Fig. \ref{fig1case1}(b) and \ref{fig2case1}(b) show the comparison of $J_\tau$ versus $\Card(K)$ between the delay-agnostic and delay-aware algorithms for \textbf{S1} and \textbf{S2}, respectively. For \textbf{S1}, we can see in Fig. \ref{fig1case1}(b) that while $K$ obtained by both algorithms are of comparable sparsity levels, the one resulting from the delay-agnostic algorithm easily destabilizes the closed-loop system. In contrast, the delay-aware algorithm consistently maintains closed-loop stability and $\mc{H}_2$ optimality of the delayed system. Similar conclusions can be drawn for \textbf{S2} as seen in Fig. \ref{fig2case1}(b). Since $J_\tau$ is not defined for unstable systems, a nominal value of $J_\tau=-1000$ is allotted for representation.
\par The upper panel of Fig. \ref{fig1case1}(b) for \textbf{S1} shows an initial decrease in $J_\tau$ till it reaches a minimum, followed by a gradual increase as sparsity further increases. A similar trend is observed for \textbf{S2} in Fig. \ref{fig2case1}(b) but here the minimum value of $J_\tau$ is reached at a much higher sparsity level compared to \textbf{S1}. This is because \textbf{S1} forces $\tau$ to remain at the initial high value of $36$ ms, while in \textbf{S2} $\tau$ decreases with increasing sparsity, which aids in achieving a higher sparsity level for a lower $J_\tau$. The initial detrimental effect of increasing sparsity on $J_\tau$ is weaker compared to the aiding effect of the decreasing delay. After the minimum $J_\tau$ is reached, the effect of increasing sparsity takes over and $J_\tau$ starts increasing monotonically. Algorithm 2 successfully captures this minimum. 
 \subsection{Case-2 : Block Sparsity}
  Next, we consider a $10$-th order LTI system. The numerical values of the  randomly generated state matrix $A$ and the control input matrix $B=\texttt{BlkDiag}\left(B_1,B_2,B_3\right)$, with $B_1\in \mathbb{R}^{2\times 2}$, $B_2\in\mathbb{R}^{3\times 3}$ and $B_3\in \mathbb{R}^{5\times 5}$ are provided in \citet{dataset1}. We assume $B_w = I_{10}$. This case presents the extension of our design to a {\it block sparse} $K$. Again, as shown in Fig. \ref{fig1case3}(b) and Fig. \ref{fig2case3}(b) respectively, our algorithm guarantees closed-loop stability while the delay-agnostic algorithm fails to do so for both conditions \textbf{S1} and \textbf{S2}. Also, delay-aware algorithm for \textbf{S2} achieves higher sparsity for a lower $J_\tau$ compared to \textbf{S1} due to the flexibility in changing $\tau$.
  \begin{figure}[hbtp]
\begin{minipage}{0.5\linewidth}
\centering
\includegraphics[scale=0.4]{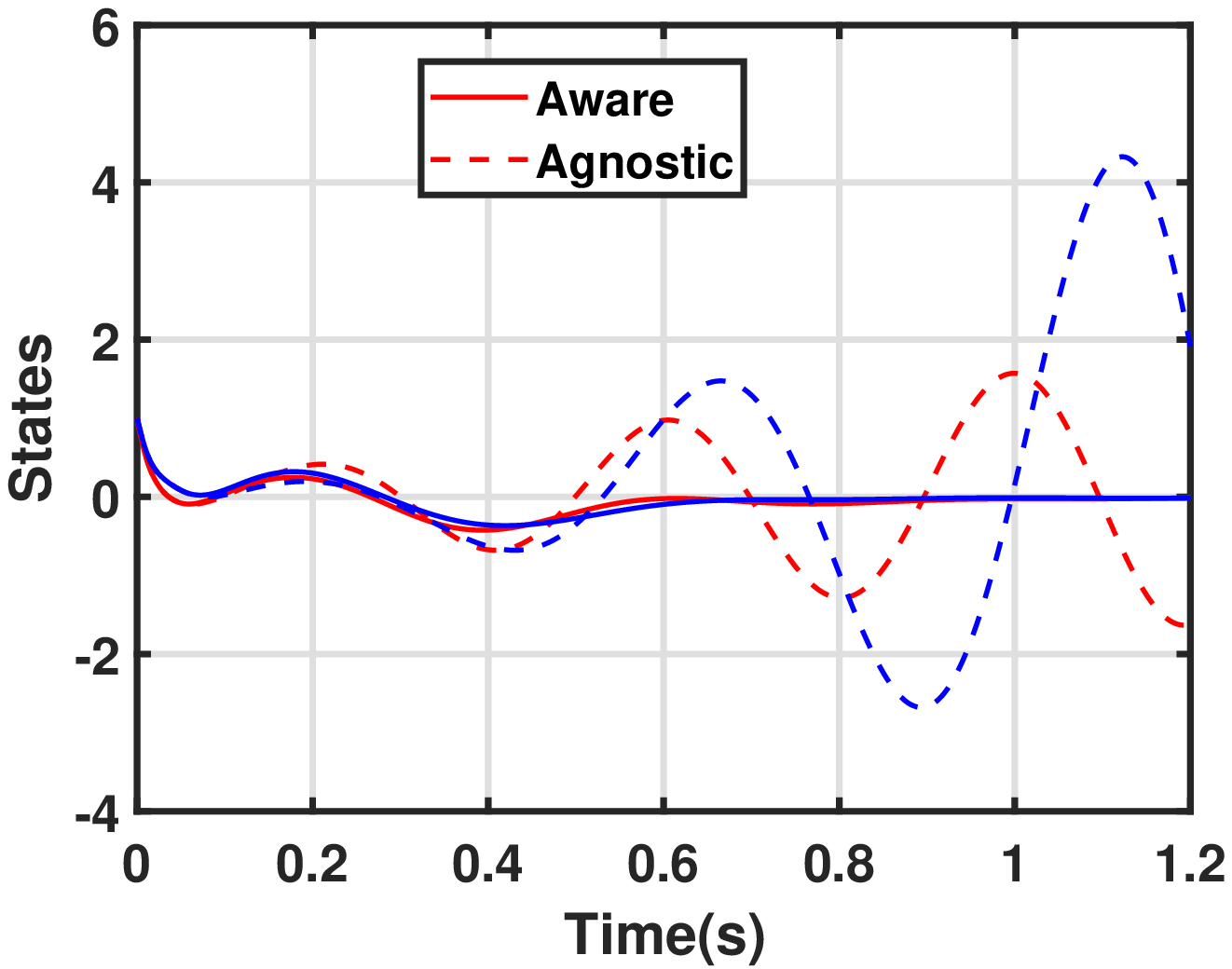}\\
(a)\\
\includegraphics[scale=0.42]{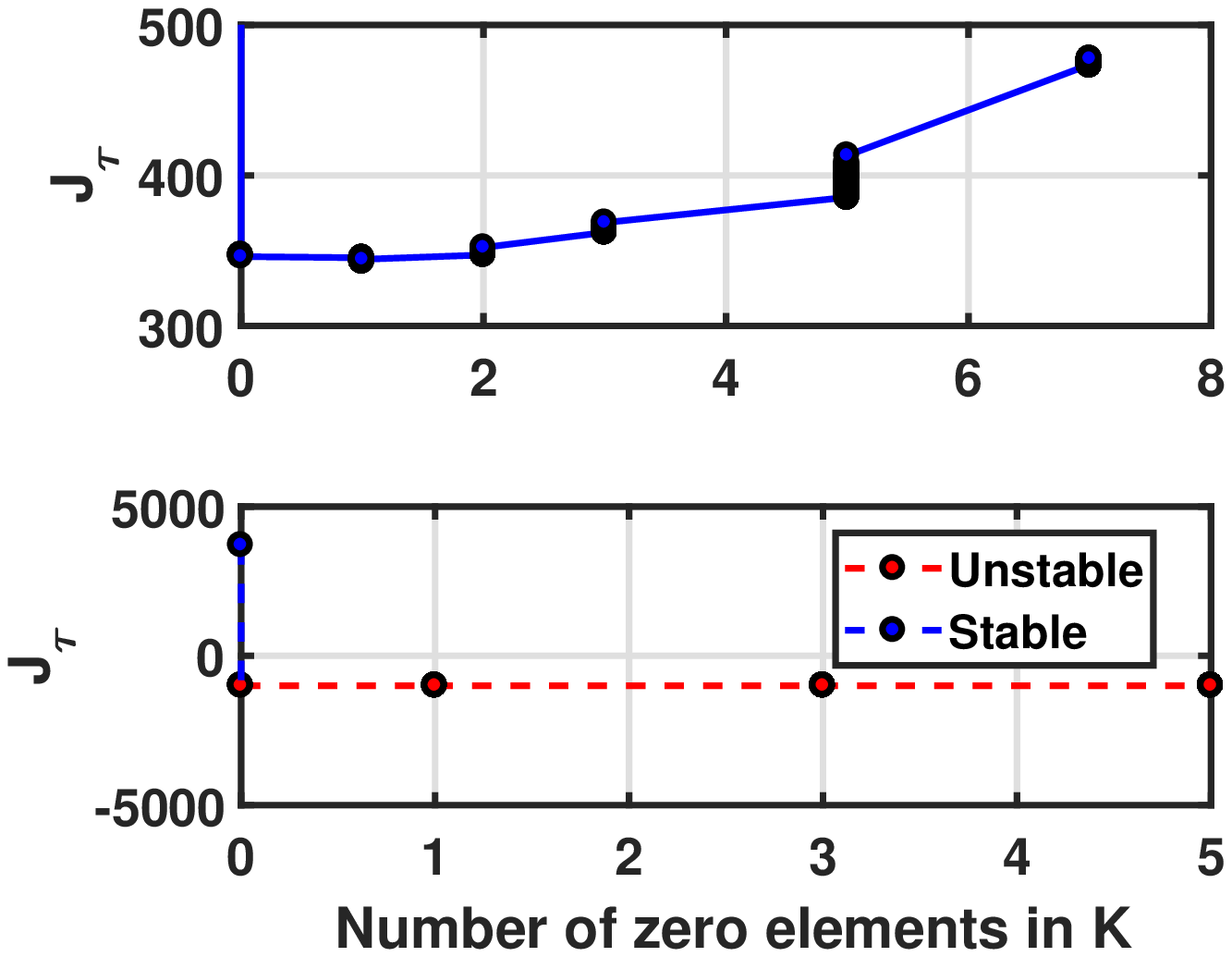}\\
(b) Delay-aware (upper), Delay-agnostic (lower)
\caption{Case-2 : \textbf{S1}}
\label{fig1case3}
\end{minipage}%
\begin{minipage}{0.5\linewidth}
\centering
\includegraphics[scale=0.4]{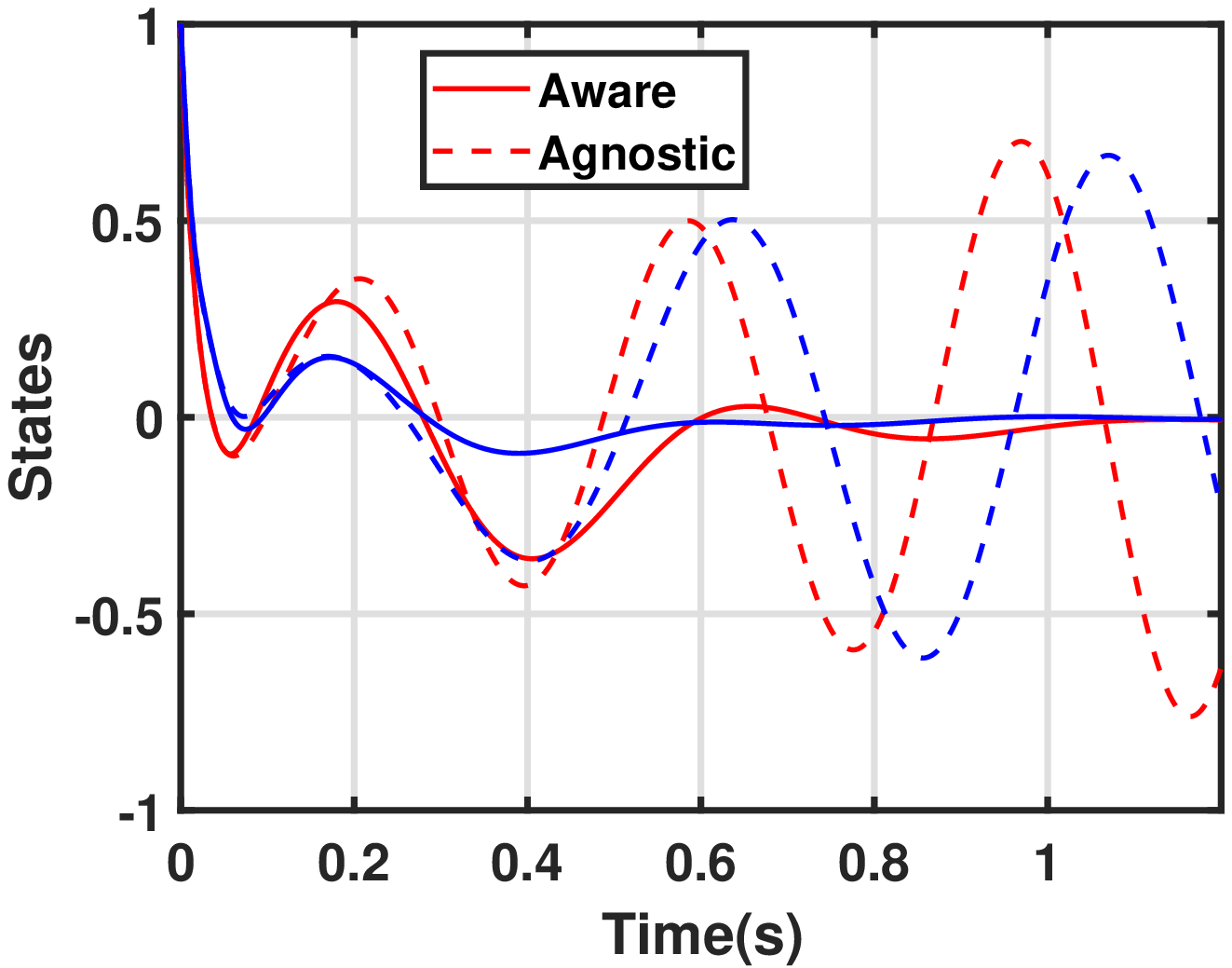}\\
(a)\\
\includegraphics[scale=0.4]{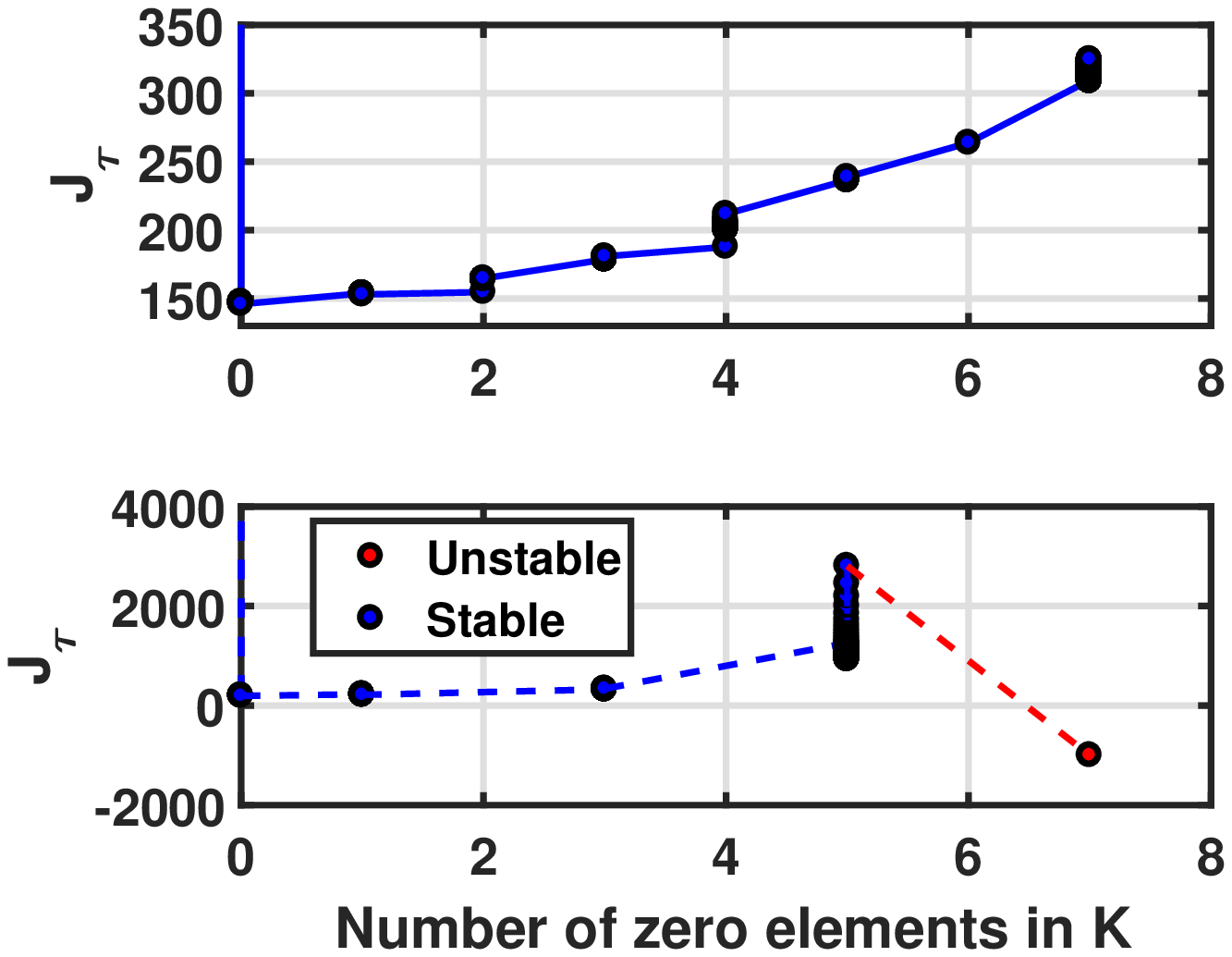}\\
(b) Delay-aware (upper), Delay-agnostic (lower)
\caption{Case-2 : \textbf{S2}}
\label{fig2case3}
\end{minipage}
\end{figure}
\subsection{Case-3 : \texorpdfstring{$K$}{K} is not stabilizing for the delay-free system}
One benefit of Algorithm \ref{algo1} is that we can obtain a stabilizing pair $(K',\tau')$ where $\tau'=\mc{Z}(\Card(K'),c,$ $\tau_p)$ lies beyond the delay margin interval $[0,\tau'^m]$ of $(A,B,K')$, as explained in Section \ref{sec:changingdelaywithsparsity}. This means that even for a relatively low value of the bandwidth $c$ and a high value of $\tau_p$, Algorithm \ref{algo1} can successfully find a stable pair $(K',\tau')$, thereby fulfilling the condition in \eqref{pairrequired}, and preventing the case where one may need to revise $c$ to a higher bandwidth $c'$. The trade-off is that such a starting pair for Algorithm \ref{algo2} may result in the \textbf{4-2} scenario of Section \ref{sec:changingdelaywithsparsity}, where we lose on $\mc{H}_2$ optimality. This case describes such a scenario. We consider a $10$-th order LTI system with randomly generated state matrix $A$ \citep{dataset1} and $B=B_w=I_{10}$. For $c=10.5$, $\tau_p=28.34$ ms, and $\kappa=0.01$ (compared to Case-1, $c$ is lower and $\tau_p$ is higher). Algorithm \ref{algo1} starts with a full matrix $K_o$ and the corresponding $\tau^m_o=56.5$ ms. It converges to the desired $\tau'=123.5$ ms corresponding to a full matrix $K'$. However, the pair $(K',\tau')$ is such that $K'$ is not stabilizing for $\tau=0$. Therefore, a comparison with the delay-agnostic algorithm is not possible. The sparsity behavior is seen to be similar to Case-1, as shown in Fig. \ref{fig1case2}(b) and \ref{fig2case2}(b) for \textbf{S1} and \textbf{S2}, respectively. For \textbf{S1}, $\tau$ is kept constant at $123.5$ ms, and the minimum $J_\tau=375.5$ is achieved at $s=12$. Thereafter, $J_\tau$ increases steeply with sparsity. In \textbf{S2}, $\tau$ decreases from $123.5$ ms to $97.87$ ms as we reach $s=58$. We see a steady decline in $J_\tau$ such that minimum $J_\tau=273.4$ is reached at $s=49$, thereby confirming that for \textbf{S2}, a lower $\mc{H}_2$ norm is obtained at a much higher sparsity level compared to \textbf{S1}. 
\begin{figure}[hbtp]
\centering
\begin{minipage}{0.5\linewidth}
\centering
\includegraphics[scale=0.4]{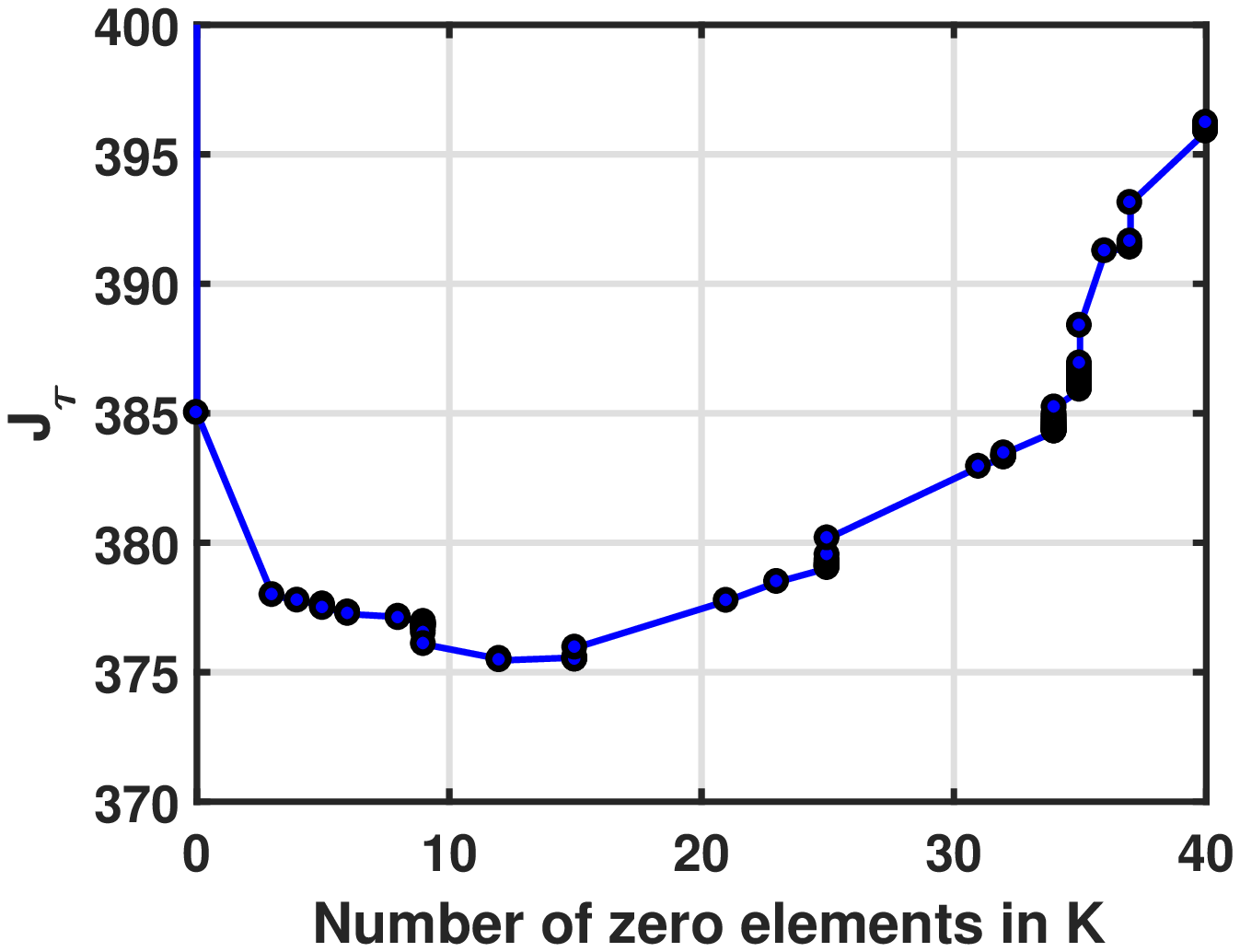}\\
\caption{Case-3 : \textbf{S1}}
\label{fig1case2}
\end{minipage}%
\begin{minipage}{0.5\linewidth}
\centering
\includegraphics[scale=0.42]{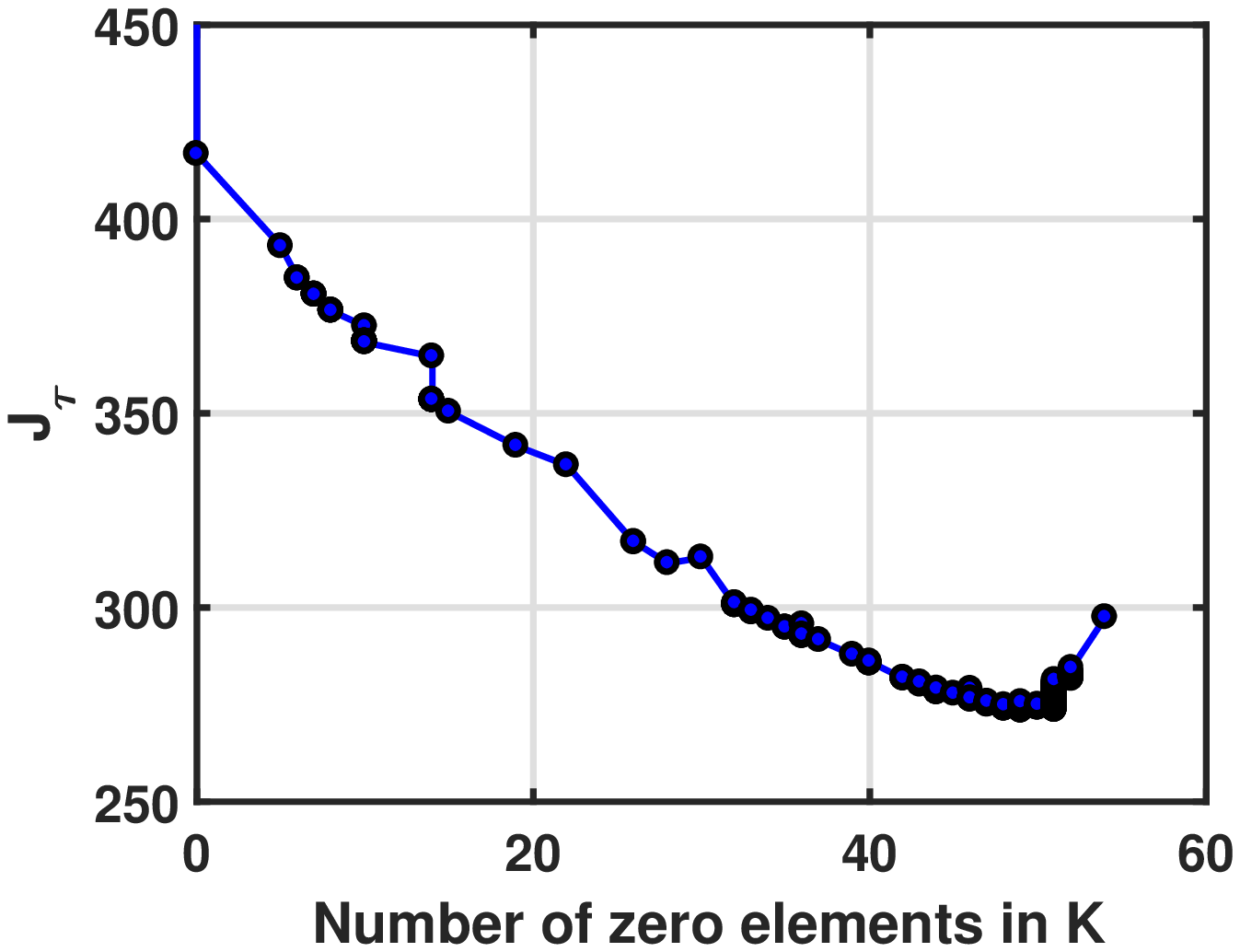}\\
\caption{Case-3 : \textbf{S2}}
\label{fig2case2}
\end{minipage}
\end{figure}
\section{Cooperative Coupled Sparsification}
\label{sec:coop}
Algorithms \ref{algo1} and \ref{algo2} have provided the designs for sparse $K$ when there is a single user consuming the entire available bandwidth. We next extend this design to a multi-user system, which consists of multiple dynamically decoupled LTI systems that share the same set of communication links for their individual closed-loop control. Thus, the closed-loop performance of the users are indirectly dependent on each other even though they are physically decoupled. This section presents a characterization of the {equivalent} $\mc{H}_2$ performance between the users followed by an algorithm that enables sparse controller designs by fairly allocating the number of available communication links.

\subsection{Problem Formulation}
\label{subsec:jointnetworkperformance}
Let there be ${N}$ users sharing $\ell$ communication links. Let the $i$-th user's closed-loop state-space model be represented as
\begin{align}
&\dot{x}_i(t) = A_i x_i(t) + B_i u_i(t) + B_{w_i} w_i(t), \ i\in\mathbb{N}_N,\\
&u_i(t)=K_i x_i(t-\tau_i),
\end{align}
where $\tau_i$ is the feedback delay associated with the $i$-th system, $x_i\in\mathbb{R}^{n_i}$ are the states, $u_i\in\mathbb{R}^{m_i}$ are the control inputs and $K_i\in \mathbb{R}^{m_i\times n_i}$ are the feedback gains. Ideally, one would need $\sum_{i=1}^N m_i n_i$ number of links to accommodate an optimal state-feedback controller for every user. If $\ell$ is less than this number then one must promote at least $\mathfrak{s}$ number of zero {entries} in $K_1,\ldots,K_N$, where $\mathfrak{s}=\sum_{i=1}^N m_in_i-\ell$.
Let $s_i\in\mathbb{Z}_+$ be the {sparsity level} of the $i$-th user, defined as the number of zero entries promoted in $K_i$. This means that $\sum_{i=1}^N s_i = \mathfrak{s}$. $K_i$ can be designed for different sparsity levels $s_i$ using Algorithm \ref{algo1} and \ref{algo2}. Since multiple structures of $K_i$ are possible for each such $s_i$, we choose that $K_i$ which results in the lowest $J_{\tau_i}$. This generates a $J_{\tau_i}$ vs $s_i$ characteristic curve for every user $i$. An example of this characteristic curve is seen in Fig. \ref{fig1case1}(b). To normalize these curves over $N$ users, let $J^o_{i}=\te{min} \ J_{\tau_i}$ be called as the \textit{nominal performance} of the $i$-th user. We define the \textit{performance cost ratio} $r_i$ of the $i$-th user as
\begin{equation}
\label{R}
r_i(s_i) = \frac{J_{\tau_i}(s_i)}{J^o_{i}}.
\end{equation}
The ratio curve $r_i (s_i)\geq 1 $ is the normalized characteristic of the $i$-th user; see Fig. \ref{fig:CPS2}(a). Let $s=[s_1,\ldots,s_N]^T\in\mathbb{R}^N$ be the vector of sparsity levels that will be allotted to the users, with the corresponding performance ratios being $r(s)=[r_1(s_1),\ldots,r_N(s_N)]^T\in\mathbb{R}^N$. We denote $H(r_1(s_1),\ldots,$ $r_N(s_N))$ as the \textit{variance} of the $\mc{H}_2$ performance ratio between the users defined as
\begin{align}
H=\frac{1}{N} \sum_{i=1}^N (r_i(s_i)- \bar{r}(s))^2 =\underbrace{\frac{r^T(s) r(s)}{N}}_{f(r(s))} - \underbrace{\frac{(1^T {r}(s))^2}{N^2}}_{g(r(s))}. \label{H}
\end{align} 
The variance $H\geq 0$ measures the fairness of link allotment. We state our problem as:
\begin{itemize}
    \item \textit{Find $s=[s_1,\ldots,s_N]$ to minimize the variance $H(r(s))$ of $\mc{H}_2$ performance ratio between the users such that $\mathfrak{s}=\sum_{i=1}^{N} s_i$, where $\mathfrak{s}$ is a known number.}
\end{itemize}
To present the above objective as an optimization problem, we define the following sets:
\begin{align}
\mc{S}_i := \ \{s_i \in \mathbb{Z}_+ : \exists \ K_i&\in\mc{V}_{\tau_i}, \ s_i=m_in_i - \Card(K_i)\},\\
 \mc{S}:=& \ \mc{S}_1\times \mc{S}_2 \times \cdots \times \mc{S}_N, \label{originalS} \\
 \mc{S}_M := & \ \mc{S}_1 +\mc{S}_2 + \cdots + \mc{S}_N.
\end{align}
$\mc{S}_i$ is the set of possible sparsity levels of the $i$-th user. $\mc{S}$ is the feasible set of our problem. The optimization problem can be stated as:
\begin{subequations}
\label{optimnew}
\begin{align}
\boldsymbol{\mc{O}}_4 : \ &\underset{s}{\te{min}} \ F=\ f(r(s)) - g(r(s)) + \sigma h(r(s)) \label{optim1}\\
&\hspace{0.5cm}\te{s.t.} \hspace{2cm} 1^T s \leq \mathfrak{s}, \label{optim2} \\
&\hspace{3.5cm} \ s\in\mc{S}, \label{optim3} 
\end{align}
\end{subequations}
where $\sigma$ is a positive scalar parameter, $\mathfrak{s}\in\mc{S}_M$ and $h(r(s))= \sum_{i=1}^N r_i(s_i)$. The term $h(\cdot)$ is added to ensure that the sum of the ratios $r_i$ do not become too large. For instance, between $r=[1.33,1.33]^T$ and $r=[1.32,1.31]^T$ for the same $\mathfrak{s}$, the latter would be preferable even though the former has a lower $H$. We choose $0<\sigma \ll 1$ to ensure that $\bs{\mc{O}}_4$ places a higher relative importance on minimizing the variance.
\begin{figure}
    \centering
    \includegraphics[scale=0.5]{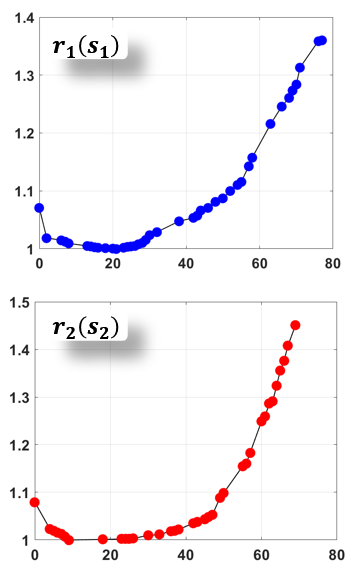}  \hspace{2cm}  \includegraphics[scale=0.5]{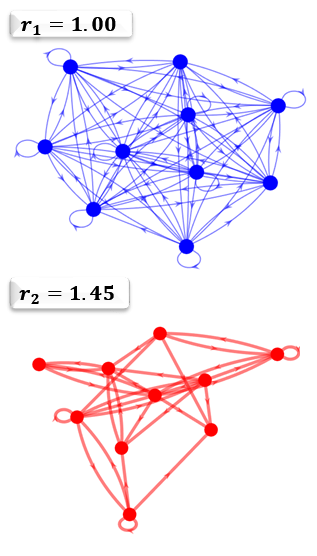}\hspace{2cm}\includegraphics[scale=0.5]{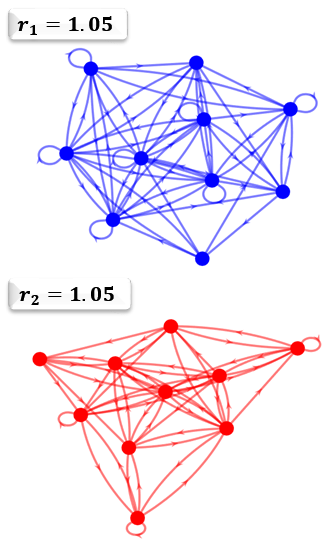}\\
    (a) \hspace{4cm} (b) \hspace{4cm} (c)
    \caption{This example shows cooperative network sparsification for N=2 users. The network has a total of $111$ available links. Each user can have a maximum of $\ell=100$ links. Fig (a) shows the $r_i(s_i)$ characteristic of the two users. Fig (b) shows the communication graphs of the two users (blue for user 1, red for user 2) at the start of Algorithm 3, when user 1 has $80$ links while user 2 has $31$ links, but their performance ratios $r_1=1$ and $r_2=1.45$ are highly disproportionate. Fig (c) shows the respective communication graphs at the end of Algorithm 3, where now user 1 has $53$ links and user 2 has $58$ links, smoothing out the disparity in their performance ratios, both of which are now $1.05$.}
    \label{fig:CPS2}
\end{figure}
\par The optimization problem in \eqref{optimnew} is defined over $\mc{S}\subset\mathbb{Z}^N$, and is therefore, combinatorial. Moreover, our objective in \eqref{optim1} is defined as a functional on the desired domain $\mc{S}$, in contrast to a standard optimization problem. We first propose a framework for relaxing the domain of $\bs{\mc{O}}_4$ from $\mathbb{Z}^N$ to $\mathbb{R}^{N}$, followed by introducing a convex-concave programming (CCP) based algorithm that solves the relaxed program.
\subsection{Relaxing the Optimization Constraints}
Since $\mc{S}_i$ is a discrete set $\forall$ $i\in\mathbb{N}_N$, $r_i(s_i)$ can be represented as a piece-wise affine function with discontinuities at every point in $\mc{S}_i$. Without loss of generality, we can assume that the discrete set $\mc{S}_i$ for the $i$-th user has $p_i$ number of elements ordered as
\begin{equation}
    s_{i,1} < s_{i,2} < \cdots < s_{i,p_i}. \label{sij}
\end{equation}
We define the continuous intervals $\mc{R}_{i,j}=[s_{i,j}, s_{i,j+1}]$, $j\in\mathbb{N}_{p_i-1}$ such that $\mc{R}_i :=  \bigcup^{p_i-1}_{j=1} \mc{R}_{i,j}$ for the $i$-th user.
\par\noindent\textbf{Example (a)} - If user $1$ has $p_1=4$ sparsity levels $\mc{S}_1=\{0,3,4,5\}$, then $\mc{R}_{1,1}=[0,3]$, $\mc{R}_{1,2}=[3,4]$, $\mc{R}_{1,2}=[4,5]$, and $\mc{R}_1=[0,5]$. If user $2$ has $p_2=4$ with $s_2=\{0,2,4,5\}$ and $\mathfrak{s}=8$, then the feasible set of $\bs{\mc{O}}_4$ is $\{(4,4),(3,5)\}$.
\par\noindent We can represent $r_i(s_i)$ for the $i$-th user as follows:
\begin{equation}
r_i(s_i) = a_{i,j} s_i + b_{i,j} , \ \forall \ s_i \in \mc{R}_{i,j}, \ j\in\mathbb{N}_{p_i-1}.\label{piecewise}
\end{equation}
Since the piecewise-affine characteristics of $r_i(s_i)$ is defined over finite real set $\mc{R}_i$ $\forall$ $i$, $\bs{\mc{O}}_4$ will search over closed and bounded sets. We next reformulate the integer program of $\bs{\mc{O}}_4$ as an MILP with the non-convex objective of \eqref{optim1}. Table \ref{tab:variables} defines the optimization variables for an $i$-th user with the sparsity level $s_i\in\mc{R}_i$ and the corresponding $r_i(s_i)$ piece-wise defined over $p_i-1$ intervals. The $p_i-2$ binary variables $\delta_{i,j} \in \{0,1\}$ determine the interval $\mc{R}_{i,j}$ for $s_i$. $\tilde{z}_i\in\mathbb{R}$ represents the corresponding piece of $r_i(s_i)$. Using Table \ref{tab:variables}, we define vector $v_i$ for every $i\in\mathbb{N}_N$, which represents the $i$-th user's decision variable in the reformulated MILP:
\begin{equation}
 v_i=[\tilde{z}_i ,z_{i,1},\ldots,z_{i,p_i-2},s_i,\delta_{i,1},\ldots,\delta_{i,p_i-2}]. 
\end{equation}
\begin{table}[hbtp]
    \centering
    \begin{tabular}{|p{0.2\linewidth}|p{0.68\linewidth}|}
 \hline
      $\delta_{i,j}\in \{0,1\}$,  & $\delta_{i,j} =1 \iff s_i \geq s_{i,j+1}$ \\
      $j\in\mathbb{N}_{p_i-2}$ &  \\
      \hline
      $z_{i,1} \in \mathbb{R}$ & $= \begin{cases}
a_{i,2} s_i + b_{i,2}, \ \ \te{if} \ \delta_{i,1}=1\\
a_{i,1} s_i + b_{i,1}, \ \ \te{else}
\end{cases}$\\
$z_{i,j} \in \mathbb{R}$, & $ = \begin{cases}
(a_{i,j+1} - a_{i,j})s_i + (b_{i,j+1} - b_{i,j}), \ \ \te{if} \ \delta_{i,j}=1\\
0, \ $\text{otherwise}$.
\end{cases}$\\
$j\in\mathbb{N}_{p_i-2}\backslash 1$ & \\
\hline 
$\tilde{z}_i\in\mathbb{R}$ &  $=\sum_{j=1}^{pi-2} z_{i,j}$\\
\hline
$d^{max}_i\in\mathbb{Z}_+$ & $=s_{i,p_i} -s_{i,1}$\\
$d^{min}_i\in\mathbb{Z}_+$ & $=s_{i,2}-s_{i,p_i}$ \\
\hline
 $t_{i,j}\in\mathbb{R}$, $j=1,2$ & $\underset{s_i\in\mc{R}_{i}}{\te{max}} \ \ \ \ \  a_{i,j} s_i - b_{i,j}$\\
  $t_{i,j}\in\mathbb{R}$, $j=3,\ldots,p_i-1$ & $\underset{s_i\in\mc{R}_{i}}{\te{max}} \ (a_{i,j}-a_{i,j-1}) s_i - (b_{i,j}-b_{i,j-1})$\\
\hline
    \end{tabular}
\caption{Optimization variables for MILP reformulation of $\bs{\mc{O}}_4$}
\label{tab:variables}
\end{table}
 \begin{table}[hbtp]
    \centering
\begin{tabular}{|p{0.255\linewidth}|p{0.63\linewidth}|}
\hline
  $j=1,\ldots,p_i-1$   & $d^{max}_i \delta_{i,j} - s_i \leq -s_{i,j+1} +d^{max}_i$\\[-0.2cm]
& $(d^{min}_i-\epsilon)\delta_{i,j} +s_i \leq  s_{i,j+1}-\epsilon$  \\
\hline
 $j=2,\ldots,p_i-1$ & $ \delta_{i,j}-\delta_{i,l} \leq 0 $, $l<j$  \\
 \hline
 $j=3,\ldots,p_i$ &  $ \mp t_{i,j} \delta_{i,j-1} \pm z_{i,j-1} \leq 0,$\\[-0.2cm]
  &$\mp t_{i,j} \delta_{i,j-1} \pm z_{i,j-1} \mp (a_{i,j}-a_{i,j-1}) s_i \mp (b_{i,j} \mp b_{i,j-1}) \leq -t_{i,j}$ \\
  \hline
$j=3,\ldots,p_i$ & $\pm(t_{i,2}-t_{i,1})\delta_{i,1}\mp z_{i,1}\pm a_{i,2} s_i \leq \pm ( t_{i,2}\- - t_{i,1})\mp b_{i,2}$\\
\hline
$j=3,\ldots,p_i$ & $\pm(t_{i,2}-t_{i,1})\delta_{i,1}\mp z_{i,1}\pm a_{i,1} s_i \leq \mp b_{i,1}$\\
\hline
 $j=3,\ldots,p_i$ & $\pm \tilde{z}_i\mp \sum_{j=1}^{\bar{p}_i} z_{i,j} \leq 0$ \\
\hline
\end{tabular}
\caption{Reformulating the constraints of $\bs{\mc{O}}_4$}
\label{tab:milpequations}
\end{table}
Table \ref{tab:milpequations} translates the definitions of Table \ref{tab:variables} into mixed integer linear inequalities, thereby, relaxing the combinatorial constraints of $\bs{\mc{O}}_4$. The inequalities in Table \ref{tab:milpequations} can be written in the form of compact LMIs $G_i v_i \leq g_i$, $i\in\mathbb{N}_N$. We reformulate $\bs{\mc{O}}_4$ as
\begin{subequations}
\label{optimnew1}
\begin{align}
\bs{\mc{O}}_5 : \ &\underset{s}{\te{min}} \ F=\ f(r(s)) - g(r(s)) + \sigma h(r(s)) \label{optim1new} \\
&\text{s.t.}\hspace{1cm} G_i v_i(s_i) \leq g_i, \ \ i\in\mathbb{N}_N,\\
&\hspace{1.7cm} 1^T s \leq \mathfrak{s}. \label{sconstraint} 
\end{align}
\end{subequations}
Let $\mathbb{S}_\mathfrak{s}$ define the solution set of $\bs{\mc{O}}_5$ for $\mathfrak{s}\in\mc{S}_M$ as:
\begin{equation}
\mathbb{S}_\mathfrak{s}=\left\{(s_1,\ldots,s_N)\in\mathbb{R}^N: \sum_{i=1}^N s_i = \mathfrak{s}, s_i\in \mc{R}_i \right\}.\label{setS}
\end{equation}
Since $\mathbb{S}_\mathfrak{s}$ is the Cartesian product of closed and bounded sets $\mc{R}_i$, $i\in\mathbb{N}_N$, it is also closed, bounded and non-empty. The objective function $F(r_1(s_1),\ldots,r_N(s_N))$ is continuous over the sets $\mc{R}_i$, which are, in turn, continuous over $s_i$ from \eqref{piecewise}. Therefore, $F$ is continuous over $s_i$, $i\in\mathbb{N}_N$ and a global minima $s^*=(s^*_1,\ldots,s^*_N)$ of $F$ with $\sum_{i=1}^N s^{*}_i=\mathfrak{s}$ exists in $\mathbb{S}_\mathfrak{s}$ from Weierstrass's theorem.
\par\noindent While $\bs{\mc{O}}_4$ is defined over $\mc{S}\subset\mathbb{Z}^N_+$, $\bs{\mc{O}}_5$ is defined over $\mathbb{S}\subset\mathbb{R}^N_+$, where
\begin{equation}
    \mathbb{S}:=\bigcup\limits_{\mathfrak{s}\in\mc{S}_M} \mathbb{S}_\mathfrak{s} = \mc{R}_1 \times \cdots \times \mc{R}_N.
\end{equation}
This implies that $\mc{S} \subset \mathbb{S}$. We next modify the $r_i(s_i)$ characteristics of all users to ensure that the local minima of $\bs{\mc{O}}_5$ always lie in $\mc{S}$ despite being defined over the much larger set $\mathbb{S}$.
\par\noindent \textbf{Example (a) [Contd.]} The local minima for $\bs{\mc{O}}_5$ should belong to $\{(4,4),(3,5)\}$. However, since $\mc{R}_1=\mc{R}_2=[0,5]$ are continuous real sets on which the piece-wise affine characteristics of \eqref{piecewise} are defined, the local minimum can very well be at some point $(4.5, 3.5)$, which is unacceptable.
\subsection{Modifying \texorpdfstring{$r_i(s_i)$}{ri(si)} Characteristics to Accommodate Integer Sparsity Levels }
To ensure that any local minima of $\bs{\mc{O}}_5$ belongs to $\mc{S}$ instead of $\mathbb{S} \backslash \mc{S}$, the following theorem modifies the piece-wise affine $r_i(s_i)$ at $s_i\notin \mc{S}_i$ $\forall$ $i\in\mathbb{N}_N$.
\begin{theorem}
\label{theorem5}
Let $r_i(s_i)$ be defined on intervals $\mc{R}_{i,j}=[s_{i,j},s_{i,j+1}]$, $j\in\mathbb{N}_{p_i-1}$, as given in \eqref{sij} and \eqref{piecewise}. With $\epsilon>0$, we partition each interval $\mc{R}_{i,j}$ for the $i$-th user as
\begin{align}
\mc{R}^{(1)}_{i,j} =&[s_{i,j}, s_{i,j}+\epsilon), \ \mc{R}^{(2)}_{i,j} =[s_{i,j}+\epsilon,s_{i,j+1} -\epsilon]\label{r1}\\
\mc{R}^{(3)}_{i,j} =&(s_{i,j+1}-\epsilon,s_{i,j+1}], \label{r3}
\end{align}
such that $\mc{R}_{i,j}=\bigcup_{q=1}^{3} R^{(q)}_{i,j}$. Let $D\in\mathbb{R}$ be a scalar parameter. We modify $r_i(s_i)$ to $\tilde{r}_i(s_i)$ as
\begin{align}
\tilde{r}_i(s_i) = \begin{cases}
r_i(s_{i}), \  &\te{If} \ s_i=s_{i,j}, \\
 i D, \ &\te{If} \  s_i=s_{i,j}\pm\epsilon,
\end{cases}\label{tildeR}
\end{align}
which is represented by the piecewise affine function
\begin{equation}
    \tilde{r}_i(s_i) = \tilde{a}^{(q)}_{i,j} s_i + \tilde{b}^{(q)}_{i,j},\ s_i\in \mc{R}^{(q)}_{i,j}.
\end{equation}
Denoting $r^{(m)} = \te{max} \{r^{(m)}_i\}$, $r^{(m)}_i =\te{max}_{\mc{S}_i}\{r_i(s_i)\}$, if $D \geq \frac{2 N^3 (r^{(m))^2}}{\epsilon}$ and $0<\epsilon < \text{min}(\sigma,\nicefrac{1}{N})$, then all the local minima for $\bs{\mc{O}}_5$ are contained inside the set $\mc{S}$.
\end{theorem}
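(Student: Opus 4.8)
The plan is to show that any point $s=(s_1,\dots,s_N)\in\mathbb{S}\setminus\mc{S}$ cannot be a local minimum of $\bs{\mc{O}}_5$ once the characteristics are replaced by $\tilde r_i$. The key observation is that $s\in\mathbb{S}\setminus\mc{S}$ means at least one coordinate, say $s_k$, lies strictly between two consecutive integer sparsity levels of user $k$, i.e. $s_k\in\mc{R}_{k,j}\setminus\{s_{k,j},s_{k,j+1}\}$. I would split into two cases according to which sub-interval of the partition \eqref{r1}--\eqref{r3} the value $s_k$ falls in. If $s_k$ lies in the \emph{interior} part $\mc{R}^{(2)}_{k,j}=[s_{k,j}+\epsilon,\ s_{k,j+1}-\epsilon]$, then on a neighbourhood of $s_k$ the function $\tilde r_k$ is affine with a fixed slope, so $F$ restricted to this neighbourhood (holding the other coordinates fixed) is a quadratic in $s_k$; a strictly convex quadratic attains its minimum at an endpoint of the sub-interval, and a concave or linear one likewise has no interior local min, so one can always decrease $F$ by moving toward an endpoint of $\mc{R}^{(2)}_{k,j}$, i.e. toward $s_{k,j}+\epsilon$ or $s_{k,j+1}-\epsilon$. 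If instead $s_k$ lies in one of the \emph{boundary} parts $\mc{R}^{(1)}_{k,j}=[s_{k,j},s_{k,j}+\epsilon)$ or $\mc{R}^{(3)}_{k,j}=(s_{k,j+1}-\epsilon,s_{k,j+1}]$, then by construction \eqref{tildeR} the value $\tilde r_k$ interpolates linearly between the genuine value $r_k(s_{k,j})$ (or $r_k(s_{k,j+1})$) and the artificially large spike $kD$; I would use the lower bound $D\ge \tfrac{2N^3 (r^{(m)})^2}{\epsilon}$ to show that the decrease in $F$ obtained by sliding $s_k$ from the spike side toward the true integer point $s_{k,j}$ (resp.\ $s_{k,j+1}$) dominates any possible increase coming from the other terms $f,g,h$ of $F$, so again $F$ strictly decreases. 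In both cases one concludes that $s$ is not a local minimum, which is the contrapositive of the claim.

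Carrying this out, the main steps are: (i) fix all coordinates except $s_k$ and write $F$ as a one-variable function $\varphi(s_k)$, expanding $f-g+\sigma h$ using \eqref{H} and the piecewise-affine form $\tilde r_i(s_i)=\tilde a^{(q)}_{i,j}s_i+\tilde b^{(q)}_{i,j}$; (ii) on $\mc{R}^{(2)}_{k,j}$ observe $\varphi$ is a quadratic $A s_k^2+B s_k+C$ with $A=\tfrac{(\tilde a^{(2)}_{k,j})^2}{N}-\tfrac{(\tilde a^{(2)}_{k,j})^2}{N^2}=(\tilde a^{(2)}_{k,j})^2\tfrac{N-1}{N^2}\ge 0$, hence $\varphi$ is convex and has no strict interior local minimum unless it is constant, and even then movement to an endpoint does not increase $F$, so we may push $s_k$ to an endpoint of $\mc{R}^{(2)}_{k,j}$; (iii) on the boundary pieces $\mc{R}^{(1)}_{k,j},\mc{R}^{(3)}_{k,j}$, bound the contributions: since $r_i(s_i)\le r^{(m)}$ for all the legitimate values while $\tilde r_k$ near the spike is of order $kD\le ND$, quantify $\partial\varphi/\partial s_k$ using $|\tilde a^{(1)}_{k,j}|\ge (ND-r^{(m)})/\epsilon$ and the fact that all other slopes are $O(r^{(m)}/\epsilon_{\text{gap}})$ with bounded coefficients, so the derivative has a definite sign forcing a decrease toward $s_{k,j}$ (or $s_{k,j+1}$); (iv) combine the cases: by repeatedly applying (ii) then (iii) coordinate-by-coordinate, from any $s\notin\mc{S}$ one reaches a strictly smaller value of $F$, so no local minimum lies outside $\mc{S}$; finally note $0<\epsilon<\min(\sigma,1/N)$ guarantees the sub-intervals $\mc{R}^{(1)},\mc{R}^{(2)},\mc{R}^{(3)}$ are well-defined and non-degenerate (consecutive integer sparsity levels differ by at least $1$, so $s_{i,j+1}-\epsilon>s_{i,j}+\epsilon$) and that the $\sigma h$ term cannot overturn the variance-driven descent.

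The hard part will be step (iii): making the constant $\tfrac{2N^3 (r^{(m)})^2}{\epsilon}$ come out exactly, i.e. carefully bounding how much the $f(r(s))=r^Tr/N$, $g(r(s))=(1^Tr)^2/N^2$ and $\sigma h$ terms can each change when a single $\tilde r_k$ varies over a boundary sub-interval, and checking that the worst-case increase in $-g+\sigma h$ (at most something like $2(N-1)r^{(m)}\cdot|\Delta\tilde r_k|/N + \sigma|\Delta\tilde r_k|$ per unit change, accumulated over coordinates) is strictly dominated by the decrease in $f$ coming from the spike term (of order $D\,|\Delta s_k|/\epsilon$ times a factor $1/N$). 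The factor $N^3$ should emerge from: one factor $N$ because $1^Tr$ has up to $N$ terms of size $r^{(m)}$, a second factor $N$ from squaring in $g$ relative to the $1/N$ in $f$, and a third factor $N$ because the spike value is $iD\le ND$ rather than $D$; getting the inequalities to chain cleanly while respecting $0<\epsilon<\min(\sigma,1/N)$ is the delicate bookkeeping. Everything else — convexity in (ii), Weierstrass existence already noted, and the piecewise-affine reformulation — is routine.
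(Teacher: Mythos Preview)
Your proposal has two genuine gaps. First, the one-coordinate descent in step~(i)--(ii) ignores the feasibility constraint $1^T s=\mathfrak{s}$ (the paper, and Algorithm~\ref{algo4}, work on the slice $\mathbb{S}_\mathfrak{s}$ with equality, and the paper's proof explicitly restricts to perturbations with $\sum_i \pm\epsilon_i=0$). Holding all coordinates but $s_k$ fixed is simply not a feasible variation on that slice, so step~(ii) as written does not test local minimality of $\bs{\mc{O}}_5$. Second, and more fundamentally, your treatment of the middle piece $\mc{R}^{(2)}_{k,j}$ misses the structure of the modification \eqref{tildeR}: both endpoints $s_{k,j}+\epsilon$ and $s_{k,j+1}-\epsilon$ are assigned the \emph{same} value $kD$, so the affine interpolant on $\mc{R}^{(2)}_{k,j}$ is the constant $kD$, i.e.\ $\tilde a^{(2)}_{k,j}=0$. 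Your quadratic then has $A=B=0$ and $\varphi$ is constant on the whole sub-interval; ``movement to an endpoint does not increase $F$'' is not enough to rule out a local minimum. The paper does \emph{not} argue by convexity here at all. It instead uses a value comparison: whenever some coordinate sits in a middle piece one has $\tilde r_k=kD$, and the lower bound $D\ge 2N^3(r^{(m)})^2/\epsilon$ together with $\epsilon<\sigma$ forces $\tilde F(s)>\sigma(D-r_1)>(r^{(m)})^2+Nr^{(m)}\ge \max_{s^*\in\mc{S}_{(\mathfrak{s})}}\tilde F(s^*)$, so such $s$ cannot be a local minimizer.

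Note also that the constant $2N^3(r^{(m)})^2/\epsilon$ is \emph{not} produced by balancing $f$ against $-g$ as you sketch; it enters through the regularizer $\sigma h(\tilde r)$ (this is exactly why the hypothesis $\epsilon<\sigma$ is needed, to convert $\sigma D$ into something larger than the maximum variance plus sum over $\mc{S}$). For the boundary pieces $\mc{R}^{(1)},\mc{R}^{(3)}$ the paper again compares values, writing $\tilde r_i(s_i)=r_i(s^*_i)+(iD-r_i(s^*_i))\bar\epsilon_i$ with $\bar\epsilon_i=\epsilon_i/\epsilon\in[0,1)$ and showing directly that $\tilde F(s)\ge\tilde F(s^*)$ for the nearest $s^*\in\mc{S}_{(\mathfrak{s})}$, rather than computing a derivative. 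If you want to salvage a descent-style argument, you must (a) perturb along feasible directions (pairs $s_k\!\uparrow,\,s_l\!\downarrow$) and (b) on the middle region exploit the large \emph{value} $kD$, not a slope, since the slope there is zero.
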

\begin{proof}
The proof is stated in Appendix (Section \ref{subsec:proofoftheorem5}). 
\end{proof}
\par\noindent The piecewise affine function $\tilde{r}_i(s_i)$ retain $r_i$ values for all $s_i\in\mc{S}_i$ following \eqref{tildeR}. The corresponding modified intervals for the $i$-th user $\mc{R}^{(k)}_{i,j}$, $k\in\mathbb{N}_3$, $j\in\mathbb{N}_{p_i-1}$ are used to determine the optimization variables in Table \ref{tab:variables}, followed by the linear inequalities in Table \ref{tab:milpequations} in the form of the modified LMIs $\tilde{G}_i \tilde{v}_i \leq \tilde{g}_i$. 
\par\noindent We next use convex-concave procedure (CCP) to relax the non-convex objective function in \eqref{optim1new} \citep{ccp}. This is done by considering any point $s_{(0)}=[s_{1_0},\ldots,s_{N_0}]^T \in \mc{S}_{(\mathfrak{s})}$, where $\mc{S}_{(\mathfrak{s})}\subset \mc{S}$ is defined as
\begin{equation}
    \mc{S}_{(\mathfrak{s})} := \{s\in\mc{S} : \sum_{i=1}^N s_i = \mathfrak{s}\},
\end{equation}
and bounding the concave part of the DC objective $-g(s)$ by its convex relaxation around this point as
\begin{align}
\label{ccp}
\hat{g}(s;s_{(0)})=g(s_{(0)}) + \nabla g(s_0)^T (s-s_{(0)}).
\end{align}
In that case, the relaxed objective function for $\bs{\mc{O}}_5$ can be written as
\begin{align}
    \label{Fhat}
    \Hat{F}(s;s_{(0)}) = f(s) - \Hat{g}(s;s_0) + \sigma h(s).
\end{align}
\begin{algorithm}[hbtp]
 \caption{Cooperative Sparsification}
 \label{algo4}
 \begin{algorithmic}[1]
\State \textbf{Input:} Number of links present, $\ell$.
\State \hspace{1.1cm} Number of entries in $K_1,\ldots,K_N$ to be
\Statex \hspace{1.1cm} removed, $\mathfrak{s}=\sum^{N}_{i=1}m_in_i-\ell$.
\State \textbf{Input:} Initial $s_0=[s_{1_0},\ldots,s_{N_0}]^T$ and $r_0(s_0)$.
\State \textbf{Input:} LMI matrices $\tilde{G}_i$, $\tilde{g}_i$ $\forall$ $i\in\mathbb{N}_N$.
\State \textbf{Set:} $s^*=s_0$, $F^*=F(s_0)$.
\For{Fixed number of Steps}
\State Construct $s=[\varrho_1 \tilde{v}_1,\ldots, \varrho_N \tilde{v}_N]$, where 
\Statex $\varrho_i\in\mathbb{R}^{2p_i-2}$ has $1$ as its $p_i$-th entry, $0$ otherwise. 
\State \textbf{Solve the following program :}
\State $\underset{\tilde{v}_1,\ldots,\tilde{v}_N}{\te{argmin}} \  f(s) - \hat{g}(s;s_0)  + \sigma h(s)$
\State s.t. \hspace{0.5cm} $\tilde{G}_i \tilde{v}_i \leq \tilde{g}_i$, $i\in\mathbb{N}_N$,
\State \hspace{1cm} $\sum_i^N \varrho_i \tilde{v}_i=\mathfrak{s}$. 
\State \textbf{Check : } $F(s)<F^*$. If yes, set $F^*=F(s)$, 
\Statex \hspace{2cm}$s^*=s$.
\State \textbf{Set : } $s_0=s$.
\EndFor
\State \textbf{Result : $s^*$ }.
\end{algorithmic}
 \end{algorithm}
\par\noindent Algorithm \ref{algo4} shows the steps for solving $\bs{\mc{O}}_5$. The following proposition provides the conditions under which Algorithm \ref{algo4} converges.
\begin{proposition}
\label{proposition1}
For a given $\mathfrak{s}\in\mc{S}_M$, Algorithm \ref{algo4} iteratively solves $\bs{\mc{O}}_5$ such that
\par (a) the difference between $F(s_{(j)})$ and its convex upper bound $\hat{F}(s_{(j)};s_{(j-1)})$ with respect to the known $s_{(j-1)}$ is given as:
\begin{equation}
\delta_{j,j-1}=\frac{1}{N^2} \left(\sum_{i=1}^N \big(r_i(s_{i_j})-r_i(s_{i_{j-1}})\big) \right)^2,
\end{equation}
where $r_i(s_{i_j})$ is the performance ratio of the $i$-th user for the $j$-th iteration.
\par (b) Beginning from an initial guess $s_{(0)}\in\mc{S}_{(\mathfrak{s})}$, Algorithm \ref{algo4} will
converge to a local minima $s_{(k)}\in\mc{S}_{(\mathfrak{s})}$ such that
\begin{equation}
    0 \leq  F(s_{(k)}) \leq F(s_{(0)}) - \sum_{j=1}^k \delta_{j,j-1}, \label{propb}
\end{equation}
\end{proposition}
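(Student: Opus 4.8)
The plan is to read $\bs{\mc{O}}_5$ as a difference‑of‑convex program in the performance‑ratio coordinates $r=r(s)=(r_1(s_1),\dots,r_N(s_N))$ and to treat Algorithm \ref{algo4} as the associated majorization–minimization (convex–concave) iteration, so that both parts reduce to standard MM bookkeeping plus one quadratic identity. In these coordinates $f(r)=r^Tr/N$ and $g(r)=(1^Tr)^2/N^2$ are convex quadratics and $h(r)=1^Tr$ is linear, so $F=f-g+\sigma h$ is DC, and the map $\hat g(\,\cdot\,;s_{(0)})$ in \eqref{ccp} is the supporting hyperplane of the convex function $g$ at $s_{(0)}$; hence $\hat g(s;s_{(0)})\le g(s)$ for every feasible $s$, with equality at $s=s_{(0)}$. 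Consequently the surrogate $\hat F(\,\cdot\,;s_{(0)})$ in \eqref{Fhat} is convex, majorizes $F$, and is tight at $s_{(0)}$, i.e. $\hat F(s_{(0)};s_{(0)})=F(s_{(0)})$.

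For part (a): since $f$ and $\sigma h$ enter $F$ and $\hat F$ unchanged, $\hat F(s_{(j)};s_{(j-1)})-F(s_{(j)})=g(s_{(j)})-\hat g(s_{(j)};s_{(j-1)})$, which is precisely the second‑order Taylor remainder of the quadratic $g$. I would carry out the computation in the coordinates $\tilde z=(\tilde z_1,\dots,\tilde z_N)$ of the MILP, which the constraints force to equal $r(s)$, and in which $g$ is genuinely the quadratic form $g=(1^T\tilde z)^2/N^2$ with constant Hessian $\tfrac{2}{N^2}\,1 1^T$; the first‑order expansion then leaves exactly the Hessian quadratic term, so with $L(s):=1^Tr(s)=\sum_{i=1}^{N}r_i(s_i)$ a one‑line expansion of $L(s_{(j)})^2-L(s_{(j-1)})^2-2L(s_{(j-1)})\big(L(s_{(j)})-L(s_{(j-1)})\big)$ gives
\[
g(s_{(j)})-\hat g(s_{(j)};s_{(j-1)})=\frac{1}{N^2}\big(L(s_{(j)})-L(s_{(j-1)})\big)^2=\frac{1}{N^2}\Big(\sum_{i=1}^N\big(r_i(s_{i_j})-r_i(s_{i_{j-1}})\big)\Big)^2=\delta_{j,j-1}.
\]

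For part (b): in the $j$‑th pass Algorithm \ref{algo4} returns $s_{(j)}$ as a minimizer of the convex surrogate $\hat F(\,\cdot\,;s_{(j-1)})$ over the MILP feasible set intersected with $\{1^Ts=\mathfrak{s}\}$, and $s_{(j-1)}$ lies in that set (by induction, starting from $s_{(0)}\in\mc{S}_{(\mathfrak{s})}$), so $\hat F(s_{(j)};s_{(j-1)})\le\hat F(s_{(j-1)};s_{(j-1)})=F(s_{(j-1)})$. Combining with part (a),
\[
F(s_{(j)})=\hat F(s_{(j)};s_{(j-1)})-\delta_{j,j-1}\le F(s_{(j-1)})-\delta_{j,j-1},
\]
and telescoping over $j=1,\dots,k$ yields $F(s_{(k)})\le F(s_{(0)})-\sum_{j=1}^{k}\delta_{j,j-1}$. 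The lower bound $F(s_{(k)})\ge 0$ holds because $f(r)-g(r)$ is the variance $H\ge 0$ of \eqref{H} and $h(r)=\sum_i r_i\ge 0$ with $\sigma>0$, which gives \eqref{propb}. Finally, $\{F(s_{(j)})\}$ is nonincreasing and bounded below, hence convergent, so $\delta_{j,j-1}\to 0$; by the convergence theory of the convex–concave procedure \citep{ccp} every accumulation point of $\{s_{(j)}\}$ is a stationary point of $\bs{\mc{O}}_5$, and by Theorem \ref{theorem5} such points — in particular any local minimum — lie in $\mc{S}$, so the limit $s_{(k)}\in\mc{S}_{(\mathfrak{s})}$.

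The quadratic‑remainder identity of (a) and the telescoping inequality of (b) are routine. The step that needs genuine care is justifying that the surrogate gap equals exactly $\delta_{j,j-1}$: this is clean only because $g$ is quadratic and the MILP coordinates $\tilde z_i$ coincide with $r_i(s_i)$ on feasible points, so I would run that argument in the $\tilde z$‑coordinates rather than directly in $s$, where the piecewise‑affine structure of $r_i$ would otherwise obscure it. For (b), the delicate point is invoking Theorem \ref{theorem5} to rule out convergence to a spurious point of $\mathbb{S}\backslash\mc{S}$ and to upgrade the CCP stationarity guarantee to "a local minimum lying in $\mc{S}_{(\mathfrak{s})}$''; I expect certifying that the limit is in $\mc{S}_{(\mathfrak{s})}$ and is a genuine local minimum — rather than merely a KKT point of the continuous relaxation — to be the main obstacle.
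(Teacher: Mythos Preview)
Your proposal is correct and follows essentially the same route as the paper: compute $\hat F(s_{(j)};s_{(j-1)})-F(s_{(j)})=g(s_{(j)})-\hat g(s_{(j)};s_{(j-1)})$ as the quadratic Taylor remainder of $g(r)=(1^Tr)^2/N^2$ for part (a), then use the CCP descent inequality $\hat F(s_{(j)};s_{(j-1)})\le \hat F(s_{(j-1)};s_{(j-1)})=F(s_{(j-1)})$ and telescope for part (b). Your treatment is in fact a bit more careful than the paper's own proof---you justify $F\ge 0$ via $H\ge 0$ and $\sigma h\ge 0$, and you explicitly invoke Theorem~\ref{theorem5} to place the limit in $\mc{S}_{(\mathfrak{s})}$, a step the paper leaves implicit.
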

\begin{proof}
The proof is given in Appendix (Section \ref{subsec:proofofprop1}).
\end{proof}

\section{Examples}
\label{sec:simulations2}
To test the efficiency of Algorithm 3, simulations are carried out for $N=2$ and $N=3$ users. For each case, we assign a ranking system to each $s\in\mc{S}$, where $\mc{S}$ is the solution set of our original problem $\bs{\mc{O}}_4$. The reason for this is the reformulation of $\bs{\mc{O}}_4$ to $\bs{\mc{O}}_5$ due to which all the discrete points in $\mc{S}$ now become the local minima for Algorithm \ref{algo4}. Ranking of these local minima is needed to evaluate the performance of the algorithm. Separate ranking is created for each individual $\mathfrak{s}\in\mc{S}_M$. Increasing ranks are allotted according to the increasing $F(r(s))$. 
\par For $N=2$ case, the state matrices for both users are randomly chosen as $A_i\in\mathbb{R}^{10\times 10}$ with $B_i=B_{w_i}=I_{10}$, $i=1,2$. For $N=3$ case, the three systems have $n_1=5$, $n_2=4$ and $n_3=5$ number of states. The state matrices for these systems are randomly generated $A_i\in\mathbb{R}^{n_i\times n_i}$, with $B_i=B_{w_i} = I_{n_i}$. Fig. \ref{figgg1} and \ref{figgg2} show the simulation results for $N=2$ and $N=3$, respectively, where the x-axis represents the given $\mathfrak{s}$, and, the y-axis shows the absolute difference between the values of $F(r(s))$ for rank-1 and rank-2 solutions, represented by $\Delta F = |F_{\te{rank 1}}- F_{\te{rank 2}}|$. The color of the scatter plot represents the rank of the solution obtained from Algorithm \ref{algo4} with `high' referring to rank-1 (high rank) and `low' referring to the worst-ranked solution.
\begin{figure}[hbtp]
\begin{minipage}{0.5\linewidth}
\centering
\includegraphics[scale=0.5]{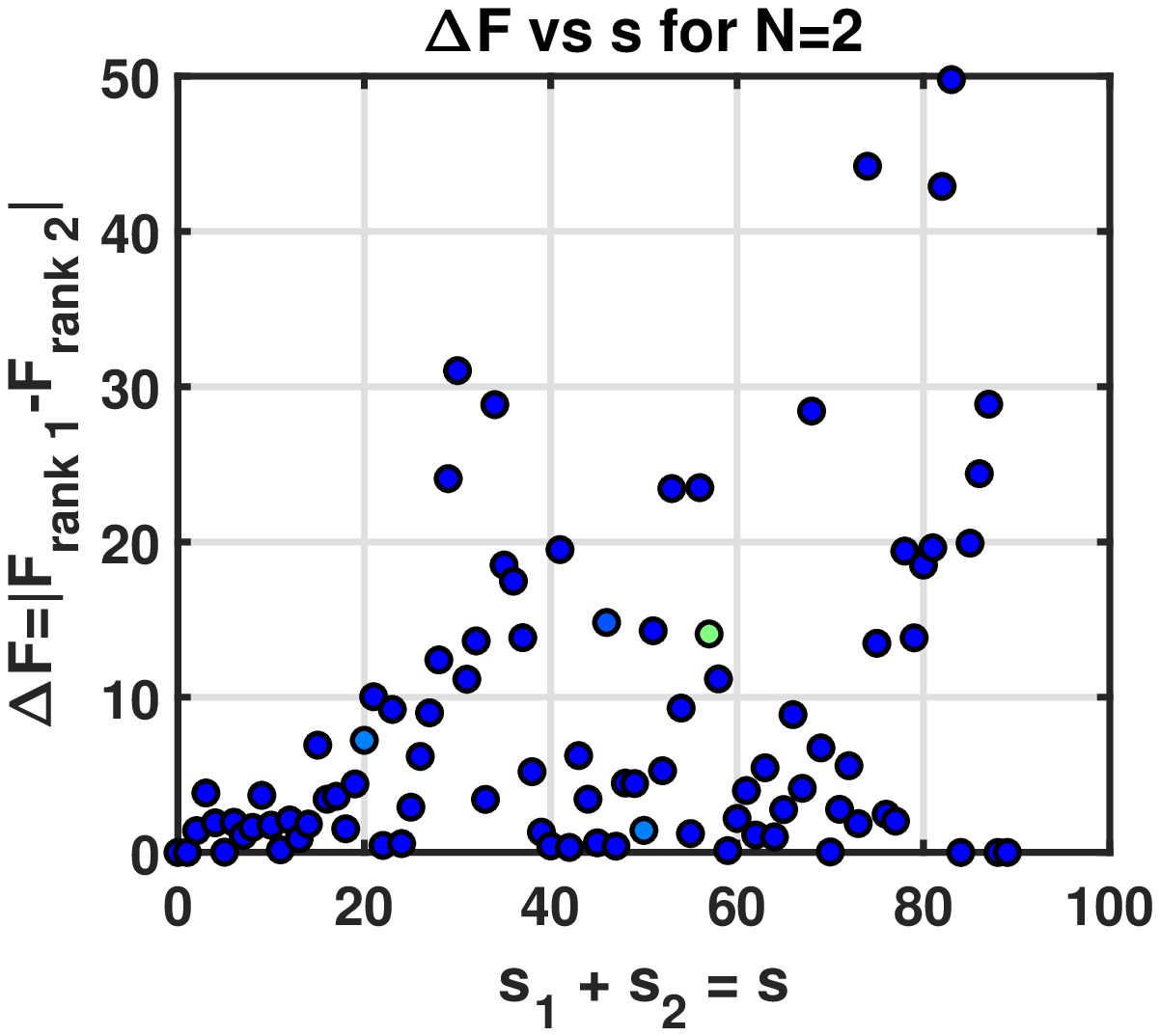}
\caption{N=2}
\label{figgg1}
\end{minipage}\begin{minipage}{0.5\linewidth}
\centering
\includegraphics[scale=0.5]{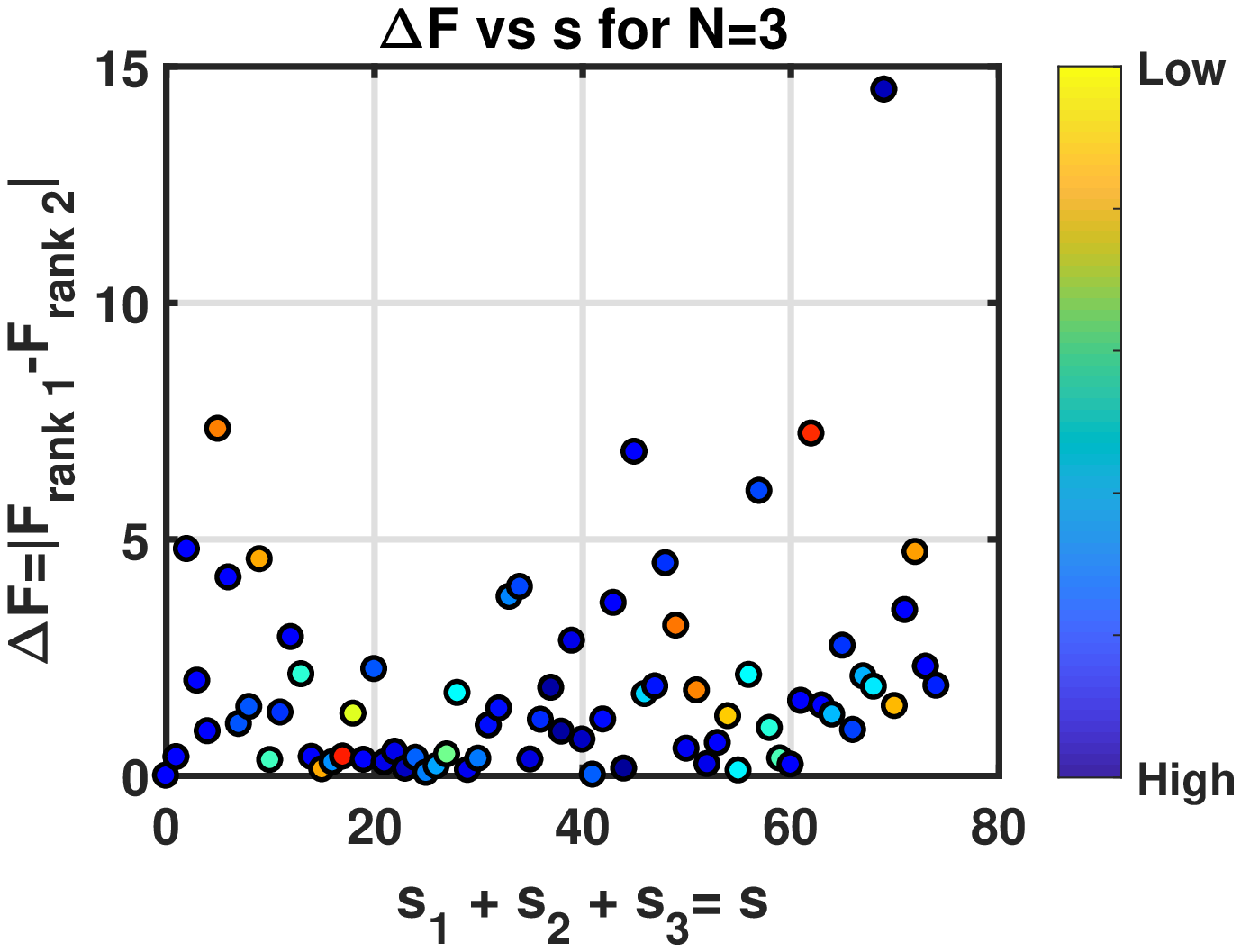}
\caption{N=3}
\label{figgg2}
\end{minipage}
\end{figure}
\par For $N=2$ case, while user-1 achieves $p_1=41$ sparsity levels $0<2<\cdots<76<77$, user-2 achieves $p_2=36$ sparsity levels ranging from $0<4<\cdots<67 <69$. Fig. \ref{figgg1} shows that Algorithm \ref{algo4} achieves high ranking solutions for most of the $\mathfrak{s}$ values. However, low ranking solutions can be obtained when $\Delta F=|F_{rank 1} - F_{rank 2}|$ is too small due to the steps of the branch and bound method that is used to solve the MILPs. Similar trend of low ranking solutions are observed for $N=3$ as well, as shown in Fig. \ref{figgg2}. 
\par A three-dimensional plot of the solution set for $\mathfrak{s}=89$ in the case of $N=2$ users is shown in Fig. \ref{fig:3dplot}, where the blue dots denote the solution set of $\bs{\mc{O}}_4$, while the connecting blue curve creates the solution set for $\bs{\mc{O}}_5$. The global minima is represented by the red dot. The $r_i(s_i)$ graphs for each user are shown in Fig. \ref{fig:CPS2}. In summary, this example shows the case where a total of $111$ links are divided, beginning from an initial allocation of $80$ and $31$ links to user-1 and 2 respectively, which corresponds to a high discrepancy between their performance ratios $r_1=1.00$ and $r_2=1.45$. Starting from this initial guess, Algorithm \ref{algo4} reaches the global minima in the second iteration, which decreases the number of links allotted to user-1 from $80$ to $53$, and, increases the number of links of user-2 from $31$ to $58$, resulting in $r_1\approx r_2=1.05$, which ensures the lowest possible disparity in their performances.
\begin{figure}[hbtp]
    \centering
    \includegraphics[scale=0.35]{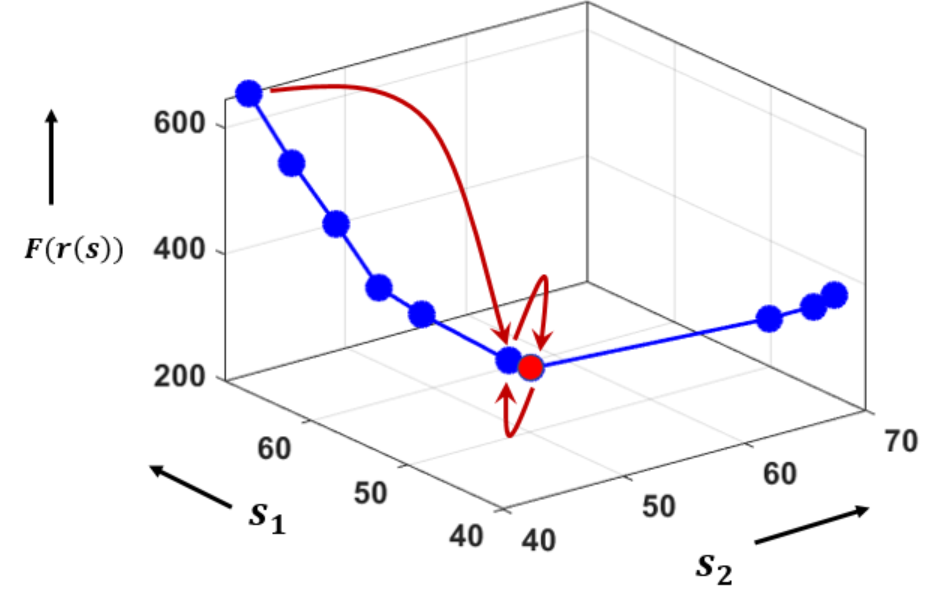}
    \caption{3-D plot of the solution set of $\bs{\mc{O}}_4$ for $\mathfrak{s}=89$ when $N=2$. The blue dots show the local minima, while the red dot shows the global minimum. The progress of Algorithm \ref{algo4} is shown by the red arrows.}
    \label{fig:3dplot}
\end{figure}
\section{Conclusion}
We develop algorithms to design sparse $\mc{H}_2$ optimal controllers for an LTI system in the presence of feedback delay. Our algorithms bring out the interplay between delays, sparsity and $\mc{H}_2$ objective. Both sparsity and $\mc{H}_2$ norm are facilitated when we include the flexibility in controlling the delay as compared to when the delay is kept constant. While traditional sparsity promoting algorithms destabilize the delayed system in most cases, ours preserves stability all throughout. In the second half of the paper, we extend these algorithms to a multiple user scenario, and assign a fair number of communication links to each user with the aim of reducing the variance of their $\mc{H}_2$ performances. Numerical simulations are presented to showcase the efficiency of all algorithms. Our future research will include extension of these strategies to a CPS with multiple delays, and analyzing their sensitivity to variations in network traffic using ideas from sparse stochastic optimal control.
\section{Appendix}
\label{sec:appendix1}
\subsection{Proof of Theorem \ref{theorem2}}
\label{subsec:proofoftheorem1}
\begin{proof}
Substituting $K=K_o + \Delta K$, $\tau = \tau_o + \Delta \tau$, $P = P_o + \Delta P$, $\cdots$, $S_{1}=S_{1o} + \Delta S_1$ in \eqref{lmi1} and assuming $\Delta \mc{F}: = \texttt{blkdiag}(B\Delta K,B\Delta K,$ $B\Delta K,B\Delta K)$, the NMI $\Phi_{\te{N}}\succ 0$ can be written as
\begin{align}
&\Phi_\te{N} =\ \Phi_0 + \underbrace{\Phi_1}_{{\te{bilinear}}} + \underbrace{\Phi_{2}}_{{\te{bilinear}}}+ \underbrace{\Phi_{3}}_{{\te{bilinear}}}+ \underbrace{\Phi_4}_{{\te{trilinear}}} \succ 0, \label{cbmi1original}
\end{align}
where $\Phi_0$ is given as
\begin{subequations}
\begin{align}
&\Phi_0 = \left[\begin{array}{c:c:c:c}
\Delta_0 & \Phi_{0_1} & \Phi_{0_2} & \Phi_{0_3} \\
\hdashline
\star & S_1 & \Phi_{0_4} & \Phi_{0_5} \\
\hdashline
\star & \star & \Phi_{0_6} & 0\\
\hdashline
\star & \star & \star & 3(S_0-S_1)
\end{array}\right],\\
&\Phi_{0_1} = Q_1 - P_o BK - \Delta P B K_o\nn\\
&\Phi_{0_2}=-\frac{\tau}{2} \big[A^T (Q_{0o} + Q_{1o}) + (R_{00o} + R_{01o}) -\frac{{\tau}_o}{2} \big[A^T ({\Delta Q_{0}} + {\Delta Q_{1}}) + ({\Delta R_{00}} + {\Delta R_{01}})\big]+(Q_0 - Q_1)\nn \\
&\Phi_{0_3}=\frac{\tau}{2} \big[A^T (Q_{0o} - Q_{1o}) + (R_{00o} - R_{01o})\big]+\frac{{\tau}_o}{2} \big[A^T ({\Delta Q_{0}} -{\Delta Q_{1}}) + ({\Delta R_{00}} - {\Delta R_{01}})\big] \nn\\
&\Phi_{0_4}=-\frac{\tau}{2} \big[K^T_o B^T (Q_{0o} + Q_{1o}) + (R^T_{01o} + R_{11o})-\frac{{\tau}_o}{2} \big[K^T_o B^T ({\Delta Q_{0}} +{\Delta Q_{1}}) + \Delta K^T B^T ({Q_{0o}} +{Q_{1o}})   \nn\\
& \hspace{7cm}+ ({\Delta R^T_{01}}+ {\Delta R_{11}})\big]\nn\\
&\Phi_{0_5}=\frac{\tau}{2} \big[K^T_o B^T (Q_{0o} - Q_{1o}) - (R^T_{01o} - R_{11o})+\frac{{\tau}_o}{2} \big[K^T_o B^T ({\Delta Q_{0}} -{\Delta Q_{1}}) + \Delta K^T B^T ({Q_{0o}} -{Q_{1o}}) \nn\\
& \hspace{7cm} - ({\Delta R^T_{01}} - {\Delta R_{11}})\big]\nn\\
&\Phi_{0_6}=\tau (R_{00o} - R_{11o}) + (S_0 - S_1) + {\tau}_o({\Delta R_{00}} - {\Delta R_{11}}),\nn
\end{align}
\end{subequations}
and $\Phi_i$, $i=1,2,3,4$ are given as
\begin{subequations}
\begin{align}
&\Phi_1= {\mc{B}_1}{\Delta \mc{F}} + {\Delta \mc{F}^T}{\mc{B}_1^T}, \ \Phi_2=\frac{{\Delta\bar{\tau}}}{2} ({\mc{B}_2} + {\mc{B}^T_2}),\label{phi12}\\
&\Phi_3={\Delta {\tau}}({\mc{B}_3} + {\mc{B}^T_3}), \ \Phi_4={\Delta {\tau}} \big({\mc{B}_4}{\Delta \mc{F}} + {\Delta\mc{F}^T}{\mc{B}_4^T}\big),\label{phi34}
\end{align}
\end{subequations}
with $\mc{B}_i$, $i=1,2,3,4$ defined as follows:
\begin{subequations}
\begin{align}
&\mc{B}_1=\left[\begin{array}{c:c:c:c}
\bb{0} & -{\Delta P} & \bb{0} &\bb{0}\\ \hdashline
\bb{0} & \bb{0} & \bb{0} & \bb{0}\\ \hdashline
\bb{0} & -\frac{{\tau}_o}{2} ({\Delta Q_0} + {\Delta Q_1})^T - \frac{\Delta {\tau}}{2}({Q_{0o}}+{Q_{1o}})^T & \bb{0} & \bb{0} \\ \hdashline
\bb{0} &\hspace{0.3cm} \frac{{\tau}_o}{2} ({\Delta Q_0} - {\Delta Q_1})^T + \frac{\Delta {\tau}}{2}({Q_{0o}}-{Q_{1o}})^T & \bb{0} & \bb{0}
\end{array}\right]\\
&\mc{B}_2=\left[\begin{array}{c:c:c:c}
\bb{0} & \bb{0} & - A^T ({\Delta Q_{0}} + {\Delta Q_1}) &  A^T ({\Delta Q_{0}} - {\Delta Q_1})  \\\hdashline
\bs{0} & \bb{0} & -K^T_o B^T({\Delta Q_{0}} +{\Delta Q_{1}})  &  K^T_o B^T ({\Delta Q_{0}} -{\Delta Q_{1}})\\\hdashline
\bb{0} & \bb{0} & ({\Delta R_{00}} - {\Delta R_{11}}) & \bb{0}\\ \hdashline
\bb{0} & \bb{0} & \bb{0} & \bb{0}
\end{array}\right]\\
&\mc{B}_3=\left[\begin{array}{c:c:c:c}
\bb{0} & \bb{0} & -\frac{1}{2} ({\Delta R_{00}} + {\Delta R_{01}}) & \frac{1}{2} ({\Delta R_{00}} - {\Delta R_{11}})  \\\hdashline
\bb{0} & \bb{0} & \frac{1}{2}({\Delta R^T_{01}} +{\Delta R_{11}})  &- \frac{1}{2}({\Delta R^T_{01}} -{\Delta R_{11}})\\\hdashline
\bb{0} & \bb{0} & \bb{0} & \bb{0}\\ \hdashline
\bb{0} & \bb{0} & \bb{0} & \bb{0}
\end{array}\right]\\
&\mc{B}_4=\left[\begin{array}{c:c:c:c}
\bb{0} & \bb{0} & \bb{0} & \bb{0} \\\hdashline
\bb{0} & \bb{0} & \bb{0} & \bb{0}\\ \hdashline
\bb{0} & -\frac{1}{2}({\Delta Q_0} + {\Delta Q_1})^T & \bb{0} & \bb{0}\\ \hdashline
\bb{0} & \hspace{0.4cm}\frac{1}{2}({\Delta Q_0} - {\Delta Q_1})^T & \bb{0} & \bb{0}
\end{array}\right].
\end{align}
\end{subequations}
The matrices $\Phi_1$, $\Phi_3$ and $\Phi_4$ are symmetric Hamiltonian \citep[Theorem 4.3.1]{nandini}, while $\Phi_2$ is symmetric, but not Hamiltonian. Therefore, the following relations can be derived using \citep[Lemma 1, Theorem 2]{nandini}:
\begin{subequations}
\label{phibound}
\begin{align}
&|\lambda_{min}(\Phi_1)|=\|\Phi_1\| = \|(\mc{B}_1) (\Delta\mc{F})\| \leq \|{\Delta {K}}\|\|{\mc{B}_1}\| \label{phi1bound} \\
 & |\lambda_{min}(\Phi_2)| \leq \|\Delta{\tau} \mc{B}_2\| \leq |{\Delta{\tau}}| \|{\mc{B}_2}\|, \label{phi2bound} \\
&|\lambda_{min}(\Phi_3)|=\|\Phi_3\| = \|(\Delta {\tau})(\mc{B}_3)\| \leq |{\Delta {\tau}}|\|{\mc{B}_3}\|\label{phi3bound}\\
&|\lambda_{min}(\Phi_4)|=\|\Phi_4\| = \|\Delta{\tau}\mc{B}_4\Delta \mc{F}\| \leq |{\Delta{\tau}}|\|{\Delta {K}}\|\|{\mc{B}_4}\|.\label{phi4bound}
\end{align}
\end{subequations}
If the NMI $\Phi_\te{N}=  \sum_{i=0}^{4} \Phi_i \succ 0$, then $\sum_{i=1}^4 \lambda_{\te{min}}\left( \Phi_i \right)  \succ 0$ as well \citep[Theorem 1.2]{weyl}. Therefore, \eqref{phi1bound}-\eqref{phi4bound} can be used to derive the relaxed convex constraint for ensuring $\Phi_\te{N}\succ 0$. However, the bounds in \eqref{phi1bound}-\eqref{phi4bound} are still bilinear. Hence, two additional convex constraints are introduced next as:
\begin{align}
|\Delta {\tau}| \leq \gamma, \  \|\Delta K \| \leq \eta,  \label{deltataubound}
\end{align}
where $\eta>0$ and $\gamma>0$ are scalar design variables. Substituting \eqref{deltataubound} in \eqref{phi1bound}-\eqref{phi4bound}, one can obtain \eqref{constraint1} to \eqref{constraint4} as the final relaxed convex constraint set for $\Phi_\te{N} \succ 0$. 
\par Next, $\Psi_N\succ 0$ is relaxed. Assuming $0<\gamma \ll 1$, which implies $\Delta {\tau} \ll 1$, we can write:
\begin{align}
\frac{1}{{\tau}} = \frac{1}{{\tau}_o + \Delta {\tau}} = \frac{{\tau}_o - \Delta {\tau}}{{\tau}^2_o - (\Delta {\tau})^2}\approx \frac{1}{{\tau}_o} - \frac{\Delta {\tau}}{{\tau}^2_o}. \label{tauconditioning}
\end{align}
Using \eqref{tauconditioning}, $\Psi_\te{N}\succ 0$ can be rewritten as:
\begin{subequations}
\begin{align}
 &\hspace{2cm}   \Psi_\te{N}=\Psi_0+\Psi_1 \succ 0,\\
&\Psi_0 = \left[\begin{array}{c:c:c}
P & Q_0 & Q_1 \\ \hdashline
\star & R_{00} + \frac{1}{{\tau}_o}{S_0} - \frac{{\Delta{\tau}}}{{\tau}_o^2}S_{0o} & R_{01}\\ \hdashline
\star & \star & R_{11}+\frac{1}{{\tau}_o}{S_1} -\frac{{\Delta{\tau}}}{{\tau}_o^2}S_{1o}
\end{array}\right],\label{psi0}\\
&\Psi_1 =  {\Delta {\tau}} \mc{C}_1, \ \mc{C}_1= \left[\begin{array}{c:c:c}
\bb{0} & \bb{0}& \bb{0}\\ \hdashline
\star & -\frac{1}{\bar{\tau}_o^2} {\Delta S_o} & \bb{0}\\ \hdashline
\star & \star & -\frac{1}{\bar{\tau}_o^2}{\Delta S_1}
\end{array}\right].\label{psi1}
\end{align}
\end{subequations}
Using \eqref{deltataubound}, \eqref{psi0}-\eqref{psi1} and the same procedure as in \eqref{phibound} and \eqref{deltataubound}, $\Psi_{\te{N}}\succ 0$ can be equivalently written as \eqref{constraint5}-\eqref{constraint6} which imply $\Phi_{\te{N}} \succeq \epsilon I \implies \Phi_\te{N} \succ 0$ and $\Psi_\te{N} \succeq \epsilon I \implies \Psi_\te{N} \succ 0$.
\end{proof}
\subsection{Proof of Theorem \ref{theorem5}}
\label{subsec:proofoftheorem5}
\begin{proof}
Let the solution of $\bs{\mc{O}}_5$ be denoted as $s=(s_1,\ldots,s_N)\in \mathbb{S}_\mathfrak{s}$ for a given $\mathfrak{s}\in\mc{S}_M$, i.e., $\sum_{i=1}^N s_i =\mathfrak{s}$. Denoting $r(s) = [r_1(s_1),\ldots,r_N(s_N)]^T$, $F(r(s))=H(r(s)) + \sigma h(r(s))$ is denoted as $F(s) = H(s) + \sigma h(s)$. Since $\bs{\mc{O}}_5$ is solved using $\tilde{r}(s)$, the objective function $F(\tilde{r}(s))$ is denoted as $\tilde{F}(s)$.
\par We define the set $\mc{\tilde{S}} := \mc{\tilde{R}}_1 \times \cdots\times \mc{\tilde{R}}_N$, where $\mc{\tilde{R}}_i = \bigcup_{j=1}^{p_i-1}\bigcup_{q=1,3} R^{(q)}_{i,j}$ for the $i$-th user. We further define its subset $\mc{\tilde{S}}_{(\mathfrak{s})}$ $:= \{s\in\mc{\tilde{S}}:$ $\sum_{i=1}^N s_i = \mathfrak{s} \}$. Note that the desired solution set $\mc{S}_{(\mathfrak{s})}$ defined in \eqref{originalS} is such that $\mc{S}_{(\mathfrak{s})}\subset \mc{\tilde{S}}_{(\mathfrak{s})}\subset \mathbb{S}_\mathfrak{s}$. Since $\bs{\mc{O}}_5$ is evaluated over $\mathbb{S}_\mathfrak{s}$, we first prove that with the given bounds on $D$ and $\epsilon$, the local minima of $\tilde{F}(s)$ cannot belong to $\mathbb{S}_\mathfrak{s}/\mc{\tilde{S}}_{(\mathfrak{s})}$. We further prove that the local minima only belong to the subset $\mc{S}_{(\mathfrak{s})}$ of $\mc{\tilde{S}}_{(\mathfrak{s})}$. For any $s\in \mathbb{S}_\mathfrak{s}$, $\tilde{F}(s)$ is defined as: 
\begin{equation}
    \tilde{F}|_{s\in\mathbb{S}} = F(\tilde{r}_1(s_1),\ldots,\tilde{r}_{N}(s_N)), \label{66}
\end{equation}
where $s_i=s^*_i \pm \epsilon_i$ is an $\epsilon_i>0$ perturbation on the nearest sparsity level $s^*_i\in\mc{S}_i$ such that $\sum_{i=1}^N s^*_i \pm \epsilon_i=\mathfrak{s}$. The modified curve function of \eqref{tildeR} can be written as: 
\begin{equation}
\label{rfunc}
\tilde{r}_i(s_i) =   r_i (s^*_i) + (i D - r_i(s^*_i) ) \bar{\epsilon}_i,
\end{equation}
where $\bar{\epsilon}_i=({\epsilon_i}/{\epsilon}) \in [0,1]$ $\forall  \ i$. If $s\in\mathbb{S}_\mathfrak{s}/\mc{\tilde{S}}_{(\mathfrak{s})}$, then for $1\leq k \leq N$ users, $\bar{\epsilon}_i=1$; for the remaining $N-K$ users $\bar{\epsilon}_i \in [0,\epsilon]$. The objective function value for $s\in\mathbb{S}_\mathfrak{s}/\mc{\tilde{S}}_{(\mathfrak{s})}$ is evaluated as (dropping arguments from $\tilde{r}_i$ and $r_i$):
\begin{align}
\tilde{F}(s)|_{s\in\mathbb{S}_\mathfrak{s}/\mc{\tilde{S}}_{(\mathfrak{s})}} &= H(\tilde{r}_1,\ldots,\tilde{r}_N) + \sigma h(\tilde{r}_1,\ldots,,\tilde{r}_N )\label{HHtilde}
\end{align}
Using \eqref{rfunc}, the given bounds on $D$ and $\epsilon$, we obtain:
\begin{align}
 & \tilde{F}(s)|_{s\in\mathbb{S}_\mathfrak{s}/\mc{\tilde{S}}_{(\mathfrak{s})}} > \sigma (N + (D-r_1)) = \sigma N+ \frac{\sigma}{\epsilon}N^3 (r^{(m)})^2  > (r^{(m})^2 + N r^{(m)} > \underset{s^*\in\mc{S}_{(\mathfrak{s})}}{\text{max}} \ \tilde{F}(s^*),\label{solution}
\end{align}
 which shows that local minima cannot belong to $\mathbb{S}_\mathfrak{s} / \mc{\tilde{S}}_{(\mathfrak{s})}$. We next evaluate the objective over the set ${\tilde{S}}_{(\mathfrak{s})}$. For any $s\in\mc{\tilde{S}}_{(\mathfrak{s})}$, $s_i$ is an $\epsilon_i \in[0, \epsilon]$ perturbation on the nearest sparsity level $s^*_i\in\mc{S}_{(\mathfrak{s})}$, i.e., $s_i = s^*_{i} \pm \epsilon_i$ $\forall$ $i$, where $s$ is such that $\sum_{i=1}^N s_i = \mathfrak{s}$. It can be shown easily that $\|s-s^*\| < \sqrt{N} \epsilon $ $\forall$ $s\in\mc{\tilde{S}}_{(\mathfrak{s})}$. The modified ratio curves $\tilde{r}_i(s_i)$ follow \eqref{rfunc} with $\bar{\epsilon}_i = \nicefrac{\epsilon_i}{\epsilon}\in [0,1]$ $\forall$ $i$. To show that the local minima will only exist in the subset $\mc{S}_{(\mathfrak{s})}$ of $\mc{\tilde{S}}_{(\mathfrak{s})}$, we next prove that $\tilde{F}(s)\geq \tilde{F}(s^*)$ for $s\in\mc{\tilde{S}}_{(\mathfrak{s})}$.
\par Since $\epsilon_i$ follow $0<\epsilon_i <\frac{1}{N}$ $\forall$ $i$, the perturbations are such that $\sum_{i=1}^N \pm \epsilon_i=0$ to ensure $\sum_{i=1}^N s=\mathfrak{s}$. Using \eqref{HHtilde}, $\tilde{F}(s)$ over the set $\mc{\tilde{S}}_{(\mathfrak{s})}$ is evaluated as $\tilde{F}(s)|_{s\in\mc{\tilde{S}}_{(\mathfrak{s})}}$ $=\tilde{F}(s^*) +$  $\Delta \tilde{F}(s)|_{s\in\mc{\tilde{S}}_{(\mathfrak{s})}}$, where $\Delta \tilde{F}(s)|_{s\in\mc{\tilde{S}}_{(\mathfrak{s})}}$ follows the condition:
\begin{align}
 &\Delta \tilde{F}(s)|_{s\in\mc{\tilde{S}}_{(\mathfrak{s})}} > H((D-r_1)\bar{\epsilon}_1,\ldots,(ND-r_N)\bar{\epsilon}_N) \nn \\
 &+\Delta H (s) + \sigma ((D-r_1)\bar{\epsilon}_1 + (ND-r_N)\bar{\epsilon}_N)  0,\label{solution2}
\end{align}
where $\Delta H(s)>0$. Moreover, it can be shown easily that as $\epsilon_i \to 0$ $\forall \ i$, $\Delta \tilde{F}(s)|_{s\in\mc{\tilde{S}}_{(\mathfrak{s})}} \to 0$. Using \eqref{solution} and \eqref{solution2}, there exists a positive $\varepsilon>0$ such that if $\|s-s^*\| < \varepsilon $, $\tilde{F}(s) \geq \tilde{F}(s^*)$ for all $s\in \mathbb{S}_{\mathfrak{s}}$ and $s^*\in\mc{S}_{(\mathfrak{s})}$. This proves that all the local minima belong to $\mc{S}_{(\mathfrak{s})}$.
\end{proof}

\subsection{Proof of Proposition \ref{proposition1}}
\label{subsec:proofofprop1}
\begin{proof}
Let $r_{(j)}=[r_{1_j},\ldots,r_{N_j}]^T$ represent the vector of performance ratios obtained in the $j$-th iteration. Let $F_j=f_j-g_j +\sigma h_j$ represent the objective value for $s_{(j)}$ and $\hat{F}_{j,j-1}=f_j-g_{j-1}-\nabla g^T_{j-1} (r_{(j)} - r_{(j-1)}) +\sigma h_j$ represent the relaxed objective value for $s_{(j)}$ with respect to $s_{(j-1)}$. It follows that $\hat{F}_{j,j-1} - F_{j}$ is given as:
\begin{align}
 &\hat{F}_{j,j-1} - F_{j}= g_{j-1} + \nabla g^T_{j-1} (r_{(j)} - r_{(j-1)}) - g_j\nn\\
 &=\frac{1}{N^2}\left( \left(\sum_{i} r_{i_{j-1}}\right)^2 + 2 \sum_{i} r_{i_{j-1}} \left( \sum_{i} r_{i_{j}} - \sum_{i} r_{i_{j-1}} \right) - \left(\sum_{i} r_{i_{j}})^2  \right) \right)\nn\\
 &=\frac{1}{N^2}\left( \sum_{i} r_{i_{j-1}} \left( \sum_{i} r_{i_{j}} - \sum_{i} r_{i_{j-1}} \right)  -  \sum_{i} r_{i_{j}} \left( \sum_{i} r_{i_{j-1}} - \sum_{i} r_{i_{j}} \right)\right)\nn \\
 &=\frac{1}{N^2} \left( \sum_{i} r_{i_{j-1}} - \sum_{i} r_{i_{j}}  \right)^2 = \delta_{j,j-1}.
\end{align}
 We next derive the convergence proof. Let $y_{j-1} = f_{j-1} - g_{j-1} + \sigma h_{j-1}=F_{j-1}$ for the $(j-1)$-th iteration. Then, $y_{j-1}=F_{j-1}=\hat{F}_{j-1,j-1} \geq \hat{F}_{j,j-1}$ as CCP minimizes $\hat{F}(j,j-1)$ at every iteration $j$. Therefore, one can write $y_{j-1} \geq f_{j} - \hat{g}_{j,j-1} \geq y_{j}$. Thus, a non-increasing sequence $\{y_j\}$ is obtained, which converges to a local minima of $\bs{\mc{O}}_5$. Moreover, since $\hat{F}_{j,j-1}-F_j = \delta_{j,j-1}$, where $\delta_{j,j-1} \geq 0$ $\forall$ $j$, beginning from the $0$-th iteration, we have $\hat{F}_{1,0}=F_1 + \delta_{1,0} ,\cdots,\hat{F}_{k,k-1}=F_k + \delta_{k,k-1}$. Also, since $F_j \geq \hat{F}_{j,j-1}$ for any $j$-th iteration, it follows that $F_0 \geq F_1 + \delta_{1,0},\cdots,F_{k-1} \geq F_k + \delta_{k,k-1}$. Since $F_{j}\geq 0$ $\forall$ $j$, we directly obtain \eqref{propb}.
 \end{proof}
 
\bibliography{main}
\end{document}